 \newcommand{\vnote}[1]{{\color{blue} [{\small {#1}} --Venkat]}}
\lstdefinestyle{Matlab}{
    language        = matlab,
    frame           = lines, 
    basicstyle      = \footnotesize,
    keywordstyle    = \color{blue},
    stringstyle     = \color{green},
    commentstyle    = \color{red}\ttfamily
}
\newtheorem{theorem}{Theorem}
\newtheorem{definition}[theorem]{Definition}
\newtheorem{question}{Open Question}
\newtheorem{lemma}[theorem]{Lemma}
\newtheorem{hypothesis}[theorem]{Hypothesis}
\newtheorem{corollary}[theorem]{Corollary}
\newtheorem{example}[theorem]{Example}
\newtheorem{conjecture}[theorem]{Conjecture}
\newcommand{\wone}{\mathsf{W[1]}}
\newcommand{\mone}{\mathsf{M[1]}}
\newcommand{\fpt}{\mathsf{FPT}}
\newcommand{\kclique}{\textsc{$k$-Clique}}
\newcommand{\kexact}{\textsc{$k$-ExactCover}}
\newcommand{\ksetcover}{\textsc{$k$-SetCover}}
\newcommand{\FormatAuthor}[3]{
\begin{tabular}{c}
#1 \\ {\small\texttt{#2}} \\ {\small #3}
\end{tabular}
}
\title{Baby PIH: Parameterized Inapproximability of Min CSP}
\author{Venkatesan Guruswami\thanks{Simons Institute for the Theory of Computing, and Departments of EECS and Mathematics, UC Berkeley. Research supported in part by NSF grants CCF-2228287 and CCF-2211972 and a Simons Investigator award. Email: {\tt venkatg@berkeley.edu}} \and
Xuandi Ren\thanks{Department of EECS, UC Berkeley. Supported in part by NSF CCF-2228287. Email: {\tt xuandi\_ren@berkeley.edu}}
\and 
Sai Sandeep\thanks{Work done while at UC Berkeley, supported in part by NSF CCF-2228287. Email: {\tt saisandeep@berkeley.edu}}
}
\date{}
\begin{document}

\maketitle
\thispagestyle{empty}

\begin{abstract}
The Parameterized Inapproximability Hypothesis (PIH) is the analog of the PCP theorem in the world of parameterized complexity. It asserts that no  FPT algorithm can distinguish a satisfiable 2CSP instance from one which is only $(1-\varepsilon)$-satisfiable (where the parameter is the number of variables) for some constant $0<\varepsilon<1$. 

\smallskip
We consider a minimization version of CSPs (Min-CSP), where one may assign $r$ values to each variable, and the goal is to ensure that every constraint is satisfied by some choice among the $r \times r$ pairs of values assigned to its variables (call such a CSP instance $r$-list-satisfiable). We prove the following strong parameterized inapproximability for Min CSP: For every $r \ge 1$, it is $\wone$-hard to tell if a 2CSP instance is satisfiable or is not even $r$-list-satisfiable. We refer to this statement as ``Baby PIH", following the recently proved Baby PCP Theorem (Barto and Kozik, 2021). Our proof adapts the combinatorial arguments underlying the Baby PCP theorem, overcoming some basic obstacles that arise in the parameterized setting. Furthermore, 
our reduction runs in time polynomially bounded in both the number of variables and the alphabet size, and thus implies the Baby PCP theorem as well.
\end{abstract}

\section{Introduction}

\noindent Approximation algorithms and fixed parameter-tractabililty (FPT) are two ways to cope with NP-hard problems. Recently, there have been many works that unite the two by obtaining approximation algorithms for NP-Hard problems that run in FPT time. Examples include {\sc Vertex Coloring}\cite{DHK05,Mar08}, {\sc $k$-Path Deletion}\cite{Lee19}, {\sc Vertex Cycle Packing}\cite{Lok+17}, {\sc Flow Time Scheduling}\cite{Wie18}, {\sc Max $k$-Vertex Cover} in $d$-uniform hypergraphs \cite{SF17,Man19}, {\sc $k$-Means} and {\sc $k$-Median} \cite{LiS16,BPR+17,KMN+04,ANSW20,CGTS02,CGK+19}. 
On the other hand, there are also various developments in FPT hardness of approximation, for example, for {\sc $k$-Biclique} \cite{Lin18}, \kclique{} \cite{CCK+17, Lin21, LRSW22, KK22, LRSW23, CFL+23}, \ksetcover{} \cite{CCK+17,CL19,KLM19,Lin19,KN21,LRSW23a} and so on.
We refer to the survey by Feldmann, Karthik, Lee, and Manurangsi~\cite{FKLM20} for a more comprehensive list of both FPT approximation algorithms and FPT hardness of approximation results.

However, it is worth pointing out that the techniques used in proving FPT hardness of approximation results have been rather problem-specific, and there remain other basic problems for which even constant inapproximability is unknown. This situation is due to the lack of a PCP-like theorem in the FPT world. In classical polynomial time hardness of approximation, the PCP theorem is a fundamental result that leads to a plethora of other results via reductions. The analog of the PCP theorem in the FPT regime was explicitly proposed by Lokshtanov et al~\cite{LRSZ20} and named \emph{Parameterized Inapproximability Hypothesis (PIH)}. The PIH states that there is an absolute constant $\varepsilon > 0$ such that for a 2CSP on $k$ variables with alphabet size $n$, there is no algorithm running in time $f(k)\cdot n^{O(1)}$ that can distinguish a satisfiable instance from one where at most $(1-\varepsilon)$ fraction of constraints can be simultaneously satisfied.

Analogous to the PCP theorem, PIH not only implies many FPT time inapproximability results including \kclique{}, \ksetcover{}, \kexact{}, and {\sc $k$-SetPacking}, but also unifies the inapproximability landscape by serving as a versatile problem-agnostic starting point.  Previously, PIH was known to be implied by Gap-ETH \cite{DM18, CFM17}, a strong assumption with an inherent gap in it. Establishing PIH under a gap-free hypothesis has been a significant open problem in parameterized complexity. Very recently, following the posting of this work, PIH was established under ETH \cite{GLR+23} using the technique of parallelization. The authors first reduce 3SAT to a vectorized problem called Vector-Valued CSP, and then design parallel probabilistically checkable proofs to verify its satisfiability. However, the proof in \cite{GLR+23} fails to get any inapproximability of 2CSP under $\wone \neq \fpt$, since it is not known how to reduce a $\wone$-complete problem, say $\kclique{}$, to their Vector-Valued CSP. In fact, it was pointed out in \cite{GLR+23} that Vector-Valued CSP is likely to be $\mone$-complete and thus not likely to be $\wone$-hard, where $\mone$ is an intermediate complexity class between $\fpt$ and $\wone$ \cite{fellows2003blow, chen2007isomorphism}. Proving PIH under $\wone \neq \fpt$, therefore, still remains an important open problem.

In this work, we prove a list version of the PIH, which we call \emph{Baby PIH}, under the minimal hypothesis $\wone \neq \fpt$. In Baby PIH, we study $r$-list assignments to the underlying 2CSP instance. An $r$-list assignment has a list $L(u)$ of at most $r$ values to each variable $u$, and a constraint between variables $u,v$ is said to be satisfied by the $r$-list assignments if there is at least one pair of values $x \in L(u), y \in L(v)$ such that $(x,y)$ satisfies the constraint between the variables $u,v$. 

\begin{definition}[Baby PIH]
For all constants $r, C \ge 1$ and every computable function $f$, there is no algorithm running in time $f(k)\cdot n^C$ that can distinguish between the following cases, on input a 2CSP instance on $k$ variables with alphabet size $n$. 
    \begin{itemize}
        \item (Completeness) There is an assignment satisfying all the constraints. 
        \item (Soundness) No $r$-list assignment satisfies all the constraints. 
    \end{itemize}
\end{definition}

The Baby PIH can be compared with the ``Baby PCP theorem,'' which was established via a purely combinatorial gap theorem in a remarkable recent work of Barto and Kozik~\cite{BK22} (who also coined the term Baby PCP). 
The Baby PCP theorem asserts the NP-completeness of distinguishing satisfiable instances of a 2CSP from those which lack a satisfying $r$-list assignment, for all constants $r > 1$.   Baby PCP differs from Baby PIH in the sense that it concerns 2CSP on $n$ variables with constant alphabet size.  The concept of Baby PCP is itself not new and was studied in the early PCP days under the guise of a certain minimization version of Label Cover~\cite{ABSS97} (see also \cite{DS04}).\footnote{Label Cover is stronger than Baby PCP as stated above, as it imposes that the 2CSP relations are functions (usually referred to as projection property in PCP parlance). However, Barto and Kozik's version also has this projection property.} 

The term ``Baby'' stems from the fact that after reducing the soundness parameter to an arbitrarily small positive constant via parallel repetition, one can deduce the Baby PCP theorem from the PCP theorem.  This strategy also works in the PIH setting. Namely, we can use the canonical ``clause-variable'' construction to build a 2CSP instance with projection property, then apply parallel repetition \cite{Rao11} to amplify the soundness to below $\frac{1}{r^2}$. Suppose a 2CSP instance is $r$-list satisfiable, then by randomly picking a value from each variable's list, we can conclude there is a single assignment that satisfies at least $\frac{1}{r^2}$ fraction of constraints. Therefore, being unable to approximate 2CSP within a $\frac{1}{r^2}$ factor implies that it is hard to distinguish whether a instance is satisfiable or not $r$-list satisfiable. In other words, PIH implies Baby PIH. Thus establishing the Baby PIH is a necessary step towards proving the PIH itself, and we also believe that it is a valuable intermediate step. 

In this work, we prove Baby PIH under the minimal complexity assumption $\wone\neq\fpt$. 

\begin{theorem}
\label{thm:main-intro}
    Assuming $\wone\neq\fpt$, Baby PIH is true.
\end{theorem}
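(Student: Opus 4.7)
I would prove Baby PIH by induction on $r$. The base case $r = 1$ follows from the standard $\wone$-hardness of 2CSP satisfiability: the classical reduction from $\kclique$ takes a graph $G$ and produces a 2CSP on $k$ variables (with alphabet $V(G)$) that is satisfiable iff $G$ has a $k$-clique, which is precisely the $r=1$ promise. The task then reduces to constructing, for each $r \ge 1$, a list-amplification gadget: an FPT reduction that takes a 2CSP $\Phi$ on $k$ variables satisfying the promise ``(YES) satisfiable or (NO) not $r$-list-satisfiable'' and outputs a 2CSP $\Phi'$ on $g(r) \cdot k$ variables and polynomial alphabet satisfying ``(YES) satisfiable or (NO) not $(r+1)$-list-satisfiable.'' Iterating this gadget $r$ times from the base case then yields Baby PIH for any constant $r$.

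For the gadget itself I would adapt Barto and Kozik's combinatorial construction from their Baby PCP theorem. A natural candidate is a \emph{block product}: the new variables index tuples of $g(r)$ old variables, the new alphabet consists of consistent assignments to such tuples, and each new constraint aggregates several old constraints across a pair of tuples. Completeness is immediate, since any satisfying assignment of $\Phi$ lifts canonically to $\Phi'$. For soundness I would argue the contrapositive: given an $(r+1)$-list assignment $L'$ for $\Phi'$, extract an $r$-list assignment $L$ for $\Phi$ by projection, using a pigeonhole/averaging argument that forces one of the $r+1$ entries at each block to be ``absorbed'' by the other $r$ once the intra-block consistency constraints are accounted for.

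The main obstacle is the inverted resource trade-off compared to Baby PCP. Barto and Kozik's amplification lets the number of variables grow while the alphabet stays constant, but in the parameterized setting the opposite is required: the variable count must stay bounded by a function of $k$ alone (so $g(r)^r \cdot k$ is fine since $r$ is constant), while the alphabet is free to blow up polynomially in $n$. The gadget must therefore push the amplification complexity into the alphabet, and the soundness projection must be carried out at the level of alphabet symbols rather than variables. Ensuring that $g(r)$ depends only on $r$ (not on $k$ or $n$), and that the pigeonhole projection from $(r+1)$-lists back to $r$-lists survives this inversion so that both completeness and soundness carry through the inductive step, is where I expect the heart of the technical work to lie. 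As a welcome byproduct, if the whole reduction runs in time polynomial in both $k$ and $n$, then specializing the starting point to an NP-hard 2CSP (instead of $\kclique$) would also recover the Baby PCP theorem.
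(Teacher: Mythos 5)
Your high-level frame agrees with the paper on several points: the starting point is the $\wone$-hardness of exact 2CSP via \kclique{}, the gadget is a product construction whose new variables are constant-size sets of old variables with consistent partial assignments as the alphabet, the key obstacle is that the alphabet (not the variable count) is the large resource, and a reduction running in time polynomial in both $k$ and $n$ recovers the Baby PCP theorem. However, the heart of the argument --- the soundness extraction --- is a genuine gap in your proposal, and the specific mechanism you suggest does not work as described. You propose to take an $(r+1)$-list assignment of the product instance and project it to an $r$-list assignment of the base instance by a ``pigeonhole/averaging argument that forces one of the $r+1$ entries at each block to be absorbed.'' The trouble is that the product constraints enforce only \emph{list}-consistency: for two blocks $S,T$ it is merely required that \emph{some} pair of values in $L'(S)\times L'(T)$ agrees on $S\cap T$. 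Consequently the projections of $L'(S)$ and $L'(T)$ onto a common variable need not cohere at all, the satisfying pair for a base constraint on $(x,y)$ found inside some block $T\supseteq\{x,y\}$ need not survive in lists defined from other blocks, and taking unions over blocks blows the list size far past $r$. Nothing in a single symmetric product forces one value per block to be redundant, so the ``absorption'' step has no argument behind it; note also that proving the weaker step ``$(r+1)$-list satisfiable product $\Rightarrow$ $r$-list satisfiable base'' is not visibly easier than the full statement, so the outer induction on $r$ does not actually reduce the difficulty.

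What the paper does instead, and what is missing from your plan, is a two-sided analysis: it passes to a \emph{bipartite} direct product $\Pi^{\odot(a,b)}$ with very different block sizes $a\ll b$ (both exponential in $r$), and runs a double induction on the list sizes of the two sides, shrinking $(a,b)$ as it goes. The engine is a lemma (Lemma~\ref{lemma:2}) exploiting exactly the fact that one side's list size $r$ is a constant: if no single value $f_A$ on a small set $A$ were ``frequently'' present on the right side, one could pick a right block $T$ containing the union of the $r$ witness sets $T'_f$ together with a left block $S\supseteq A$ (possible because $b\ge r\cdot b'+a$), contradicting list-consistency between $S$ and $T$. This union-over-a-constant-number-of-witnesses trick is precisely how the large-alphabet obstacle you correctly identify is overcome, and it has no counterpart in your sketch; the eventual conclusion is the stronger implication ``$r$-list satisfiable product $\Rightarrow$ the base instance is (exactly) satisfiable,'' obtained in one reduction rather than by iterating $r$ single-step amplifications. (A minor point: a block product on $g(r)$-tuples has $\binom{k}{g(r)}$ new variables, not $g(r)\cdot k$, though this does not affect the FPT accounting.)
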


Our proof of Theorem~\ref{thm:main-intro} is combinatorial and follows the framework of Barto and Kozik's proof of the Baby PCP theorem. Specifically, given a 2CSP instance with variable set $X$ and a list size $r$, we choose large enough integers $a,b$ depending on $r$, and construct a bipartite direct product 2CSP instance, whose variable set is $\binom{X}{a}\cup\binom{X}{b}$. Given that the product instance is $r$-list satisfiable, we can repeatedly choose smaller integers $a' \le a, b' \le b$ and extract assignments for the instance with variable set $\binom{X}{a'}\cup \binom{X}{b'}$. The new assignments still list satisfy the smaller instance, but the size of each list on one side is decreased by 1, which helps us to do induction.

We highlight that although this proof strategy looks simple, there are basic obstacles to employ it in the Baby PIH setting (compared to the Baby PCP setting). In the PCP world, the alphabet size $|\Sigma|$ is at most some constant. Thus, to extract assignments for the smaller instance, it is affordable to pick $a,b$ large enough depending on $|\Sigma|$. The running time of there reduction is therefore $|X|^{O_{r,|\Sigma|}(1)}$. However in the PIH case, $|\Sigma|$ can be as large as the input size, and the running time of the reduction can only be $f(|X|)\cdot |\Sigma|^{O_r(1)}$ for some computable function $f$. To overcome this barrier, we non-trivially use the underlying structure of $r$-list satisfying assignments. We further note that our reduction runs in time $|X|^{O_r(1)}$, which is polynomial also in $|X|$. Therefore, our methods give a \emph{unified proof of both the Baby PCP theorem and Baby PIH.} \footnote{Barto and Kozik \cite{BK22} derive Baby PCP with the stronger projection property. Using our techniques, we can get Baby PCP or Baby PIH with rectangular constraints, which is a slightly weaker property but still enough for many downstream reductions.}

As we mentioned earlier, PIH implies Baby PIH, and thus our result can be viewed as a first step towards proving the former. An intermediate problem between them is the following average version of Baby PIH: let an $r$-average list assignment be a labeling $L(u)$ for each variable $u$ such that the average cardinality of $L(u)$ over all the variables $u$ is at most $r$. 
\begin{conjecture}[Average Baby PIH]
For any constants $r >1,C$ and any computable function $f$, no algorithm can given as input a 2CSP instance on $k$ variables with size $n$, distinguish between the following two cases in $f(k)\cdot n^{C}$ time:
    \begin{itemize}
        \item (Completeness) There is an assignment satisfying all the constraints.
        \item (Soundness) No $r$-average list assignment satisfies all the constraints.
    \end{itemize}
\end{conjecture}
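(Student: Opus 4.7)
The plan is to adapt the direct-product framework of Theorem~\ref{thm:main-intro} by strengthening the inductive invariant to control the \emph{average} list size rather than the maximum. Starting from a $\wone$-hard instance (e.g.\ \kclique{}), I would build, as in the Baby PIH proof, the bipartite direct product 2CSP on variables $\binom{X}{a}\cup\binom{X}{b}$ for constants $a,b$ depending on $r$. Completeness is immediate: any satisfying assignment lifts to the product. For soundness, assume we are given an $r$-average-list assignment $L$ on the product, and attempt to extract from it a list assignment on a smaller product $\binom{X}{a'}\cup\binom{X}{b'}$ whose average list size is strictly less than $r$ on at least one side. Iterating should drive the average below $1$, which forces an honest satisfying assignment of the original instance and thus a contradiction.

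A natural first step is a \emph{truncation lemma}: by Markov's inequality, if the average list size is $r$ then at most a $\tfrac{r}{R}$ fraction of variables have $|L(u)|>R$ for any threshold $R$. Choosing $R = \Theta(r)$ and restricting (in a weighted or probabilistic sense) to the ``typical'' portion of the product should allow a Barto--Kozik style combinatorial swap, where one element of an $a$-set list is replaced by an element forced through an overlapping $b$-set list. The goal is to amortize the decrease across all variables so that the \emph{average} list size drops by a definite constant per round, not merely the maximum.

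The main obstacle is twofold. First, truncating or restricting to small-list variables breaks the symmetric structure of the product construction: constraints connecting atypical (heavy-list) and typical (light-list) vertices must still be accounted for, and naively discarding them invalidates the inductive gap. One probably needs a weighting scheme, or to sample $(a,b)$-subsets conditioned on avoiding heavy variables while keeping enough constraints to drive the argument. Second, unlike in the max-list case where the list size on one side drops by exactly one in each step, the average version requires a quantitative progress bound whose constants depend on $r$ but \emph{not} on the alphabet size $|\Sigma|$; this is precisely the obstruction identified for Baby PIH versus Baby PCP, and the averaged setting seems to magnify it because the accounting must be global over all lists rather than pointwise.

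Given these obstacles, an alternative route worth pursuing in parallel is a direct black-box reduction from Baby PIH to Average Baby PIH: take the Baby PIH output that rules out all $R$-max-list assignments for a larger $R=R(r)$, and paste together many copies in a way that forces any $r$-average-list assignment to induce, on some copy, an $R$-max-list assignment, contradicting Baby PIH. Whether the product or tensor-style amplification can be tuned to convert a max-list bound into an average-list bound without blowing up the parameter $k$ or the running time beyond $f(k)\cdot n^{O(1)}$ is, I expect, the cleanest place to attack the conjecture first.
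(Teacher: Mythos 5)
You should first note that this statement is posed in the paper as a \emph{conjecture} (an open problem), not a theorem: the paper proves only the $\ell_\infty$ (max-list) version, Baby PIH, and explicitly leaves the average ($\ell_1$) version open. More importantly, your primary plan is directly refuted by the paper itself. The paper exhibits a counterexample (Example~\ref{ex:1} in Section~\ref{sec:avg}): an unsatisfiable 2CSP instance $\Pi$ on $n$ variables such that, for \emph{every} $t$, the $t$-wise direct product $\Pi^{\odot t}$ admits a list assignment satisfying all constraints with average list size at most $1+\frac{2t^2}{n}$, hence $(1+\varepsilon)$-average-list satisfiable for any $\varepsilon>0$ once $n$ is large. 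So the soundness statement you are trying to establish for the direct product --- that an $r$-average-list assignment of the product forces satisfiability (or even near-satisfiability) of $\Pi$ --- is simply false for this construction. No truncation lemma, Markov argument, or amortized accounting can rescue it, because the failure is not in the analysis but in the construction: the adversary concentrates lists of size $\Theta(t)$ on the few product variables containing one ``bad'' original variable and lists of size $1$ everywhere else, which is exactly the heavy-tail behavior your averaging scheme would need to rule out.

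Your fallback idea, a black-box amplification from Baby PIH (rule out $R$-max-lists for large $R$, then paste copies so that $r$-average forces $R$-max on some copy), runs into the same structural issue: an $r$-average-list assignment may place lists of size far exceeding any constant $R$ (up to $\Omega(k)$, or as large as the alphabet) on an $o(1)$ fraction of variables in \emph{every} copy, so no copy need inherit a bounded max-list assignment, and there is no evident way to prevent this concentration without changing the underlying construction. This gap --- preventing a small set of variables from absorbing the entire list budget --- is precisely why the paper leaves Average Baby PIH open, and why it proposes instead to start from 2CSPs with ``clique-like'' soundness (from the $\wone$-hardness of approximating \kclique{} \cite{KK22,CFL+23}), where the counterexample above does not apply because that instance has $n-1$ variables whose induced constraints are simultaneously satisfiable. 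If you want to make progress, that is the direction to pursue; as written, your proposal does not contain a proof and its main route is provably blocked.
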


Note that the difference between the Baby and Average Baby versions is to use the $\ell_\infty$ vs. $\ell_1$ norms of the number of values given to the variables.
Once again, Average Baby PIH has a counterpart in the PCP world, namely the minimization version of Label Cover with $\ell_1$ total cost, which fueled early inapproximability results for {\sc SetCover} and basic problems that concern linear codes and lattices~\cite{ABSS97}, as surveyed in \cite{AL96}. 

Average Baby PIH is an intriguing open problem and could help in making further progress towards PIH. Furthermore, the Average Baby PIH is strong enough to imply some non-trivial inapproximability results. 
Notably, with an additional property called rectangular constraints (which we will define later), it implies constant inapproximability of \kexact{}, which we previously only knew under PIH \cite{Man20}. 

Towards a better understanding of Average Baby PIH, we give a counterexample showing that the direct product construction we use to prove Baby PIH as is cannot establish the average version. This suggests in order to get Average Baby PIH or full PIH, one may need other techniques or constructions. As a candidate, we pose a question that whether the $\wone$-hardness of approximating \kclique{} can help us bypass this counterexample. Please refer to \Cref{sec:discuss} for details. 

\medskip \noindent \textbf{Organization.}
In Section \ref{sec:pre}, we introduce some preliminaries, including the problems considered in this paper and related complexity hypotheses. In Section \ref{sec:baby}, we prove Baby PIH via the direct product construction. We then discuss Average Baby PIH in Section \ref{sec:avg}, and conclude with some open problems in \Cref{sec:discuss}.

\section{Preliminaries}
\label{sec:pre}
We first start by formally defining $2$-CSP.
\begin{definition}[2CSP]\label{def:2csp}
An instance of arity-2 constraint satisfaction problem (2CSP) is a tuple $\Pi=(X,\Sigma,\Phi)$, where:
\begin{itemize}
\item $X=\{x_1,\ldots,x_k\}$ is the set of variables;
\item $\Sigma=\{\Sigma_{x_1},\ldots,\Sigma_{x_k}\}$ is the set of their respective domains of values. We use $|\Sigma|$ to denote the maximum size of any domain, and call it the alphabet size of $\Pi$. 
\item $\Phi=\{\phi_1,\ldots,\phi_m\}$ is the set of constraints. Each constraint $\phi_i$ is 
 ordered tuple $\langle w_i,R_i\rangle$, where $w_i=(w_{i,1},w_{i,2}) \in X^2$ is a pair of variables, and $R_i$ is a 2-ary relation on $\Sigma_{w_{i,1}}$ and $\Sigma_{w_{i,2}}$. 
\end{itemize}

An assignment of the 2CSP instance is a function from all variables in $X$ to their respective domains. A constraint $\phi_i$ is said to be satisfied by an assignment $\sigma$ if $(\sigma(w_{i,1}),\sigma(w_{i,2}))\in R_i$. We say an assignment $\sigma$ satisfies the 2CSP instance if all constraints are satisfied by $\sigma$. 
\end{definition}

For each constraint $\phi_i$, we can without loss of generality assume $w_{i,1} \neq w_{i,2}$, since unary constraints can be removed by restricting the domain of that variable.

We define $r$-list satisfiability, which generalizes the above satisfiability. 
\begin{definition}[$r$-List Satisfiability]
    Given a 2CSP instance $\Pi=(X,\Sigma,\Phi)$, a multi-assignment is a function mapping each variable to a subset of its domain. We define the size of a multi-assignment $\sigma$ as $\max_{x \in X} |\sigma(x)|$.
    
    We say a multi-assignment $\sigma$ $r$-list satisfies $\Pi$ if $\sigma$ is of size at most $r$, and for every constraint $\phi_i$, there exists a pair of values $u \in \sigma(w_{i,1})$ and $v \in \sigma(w_{i,2})$, such that $(u,v) \in R_i$.
\end{definition}

Normal satisfiability can be viewed as $1$-list satisfiability. Note that as $r$ increases, it becomes easier to $r$-list satisfy a 2CSP instance. 

The relations in the $2$CSPs that we construct using the direct product (see \Cref{def:prod}) satisfy a useful structural property, namely, rectangular relations. 
\begin{definition}[Rectangular Relation]\label{def:rect}
    A relation $R \subseteq A \times B$ is said to be rectangular if there is a set $C$ and functions $\pi: A \to C$ and $\sigma: B \to C$ such that $(a,b) \in R$ if and only if $\pi(a)=\sigma(b)$. Equivalently, $R$ is rectangular if for all $a,a' \in A$ and $b,b' \in B$ such that $(a,b)\in R, (a,b')\in R$, and $(a',b) \in R$, we have $(a',b')\in R$.
\end{definition}

Rectangular relations can be informally viewed as consistency checks, and they are often satisfied by $2$CSPs in product constructions. Projection relation, a stronger version of rectangular relation, is ubiquitous in PCP-based hardness reductions.

We now formally define the direct product construction that we use in our proof. 
\begin{definition}[Direct Product Construction]\label{def:prod}
    Given a 2CSP instance $\Pi=(X,\Sigma,\Phi)$, its $t$-wise direct product, denoted as $\Pi^{\odot t}$, is the following 2CSP instance $(X',\Sigma',\Phi')$:
    \begin{itemize}
        \item $X'=\binom{X}{t}$, where each variable is a $t$-sized subset of variables in $\Pi$. 
        \item The domain of each variable $S \in X'$ is the set of all partial satisfying assignments for $S$ in $\Pi$, i.e., all function $\sigma$ that maps each $x \in S$ to its domain in $\Pi$, such that all constraints in $\Pi$ induced by $S$ are satisfied.
        \item $\Phi'$ has of a consistency constraint for each pair of distinct variables in $X'$. For $S,T \in X'$, the assignments $\sigma_S,\sigma_T$  satisfy the constraint if and only if they are consistent on the values for variables in $S \cap T$.
    \end{itemize}
\end{definition}

Our results are based on the hypothesis $\wone \neq \fpt$, which is closely related to  \kclique{}, a fundamental problem in parameterized complexity theory. 
\begin{definition}[\kclique{}]
    An instance of (multicolored) \kclique{} problem is an undirected graph $G=(V=V_1\dot\cup\ldots\dot\cup V_k,E)$, where each $V_i$ is an independent set. The goal is to decide whether we can find $v_1 \in V_1,\ldots,v_k \in V_k$ which form a clique. For $r>1$, the $r$-gap version of \kclique{} asks to distinguish between the following two cases:
    \begin{itemize}
        \item (Yes) There exists $v_1 \in V_1,\ldots,v_k \in V_k$ which form a clique.
        \item (No) The maximum clique in $G$ has size at most $k/r$.
    \end{itemize}
\end{definition}

The $\wone\neq\fpt$ hypothesis states that for any computable function $f$, no algorithm can solve a $\kclique{}$ instance with size $n$ in $f(k)\cdot n^{O(1)}$ time. For a $\kclique{}$ instance $G=(V=V_1\dot\cup\ldots\dot\cup V_k,E)$, we can build $k$ variables $x_1,\ldots,x_k$, letting $V_i$ be the domain of variable $x_i$ and making the edge set between $V_i$ and $V_j$ the constraint relation for $x_i$ and $x_j$. Thus,  $G$ corresponds to a 2CSP instance with $k$ variables and alphabet size at most $n$. It is easy to see that there is a clique of size $k$ in $G$ if and only if the 2CSP instance is satisfiable. Thus, we can restate the $\wone\neq\fpt$ hypothesis as follows. 

\begin{hypothesis}[$\wone\neq\fpt$]
    For any computable function $f$, no algorithm can decide whether a given 2CSP instance $\Pi=(X,\Sigma,\Phi)$ is satisfiable in $f(|X|)\cdot |\Sigma|^{O(1)}$ time.
\end{hypothesis}

The approximation version (with respect to the fraction of constraints that can be satisfiable) of $\wone\neq\fpt$ is called Parameterized Inapproximability Hypothesis (PIH).
\begin{hypothesis}[Parameterized Inapproximability Hypothesis (PIH)\cite{LRSZ20}] \footnote{Note that the original PIH in \cite{LRSZ20} states that constant approximating 2CSP parameterized by $|X|$ is $\wone$-hard. Here we use a relaxed form.}
    For any computable function $f$ and some constant $\varepsilon>0$, no algorithm can given as input a 2CSP instance $\Pi=(X,\Sigma,\Phi)$, distinguish between the following two cases in $f(|X|)\cdot |\Sigma|^{O(1)}$ time:
    \begin{itemize}
        \item (Completeness) $\Pi$ is satisfiable.
        \item (Soundness) Any assignment of $\Pi$ violates at least $\varepsilon$ fraction of constraints.
    \end{itemize}
\end{hypothesis}

We formally define \ksetcover{}, the parameterized version of the classical {\sc SetCover} problem, and the exact version of it. 
\begin{definition}[\ksetcover{}, \kexact{}]
    An instance of \ksetcover{} problem is a tuple $\Pi=(\mathcal S,U)$, where $\mathcal S$ is a collection of subsets $\{S_1,\ldots,S_n\}$ over the universe $U$, and the goal is to decide whether there are $k$ sets in $\mathcal S$, whose union is $U$. For $r>1$, the $r$-gap version of \ksetcover{} asks to distinguish between the following two cases:
    \begin{itemize}
        \item (Yes) There are $k$ sets whose union is $U$.
        \item (No) The union of any $r\cdot k$ sets is not $U$.
    \end{itemize}
    Furthermore, if in the yes case, the $k$ sets are non-intersecting, i.e., they form a partition of $U$, then we also denote this gap problem as \kexact{}.
\end{definition}

Finally, we define the $(T,m)$-set gadget, an important gadget used in reductions to \ksetcover{} and \kexact{}.

\begin{definition}[$(T,m)$-Set Gadget]
    A $(T,m)$-set gadget consists of a universe $\mathcal M$ and some of its subsets $C_1,\ldots,C_{m}$ with the following property: Every collection of at most $T$ sets out of $C_1,\overline {C_1},C_2,\overline{C_2},\ldots,C_m,\overline{C_m}$ that is a set cover for $\mathcal M$ must include both $C_i$ and $\overline{C_i}$ for some $i$.
\end{definition}

It was proved in \cite{LY94} that a $(T,m)$-set gadget can be constructed efficiently:

\begin{lemma}
\label{lem:construct-set-gadget}
    There is an algorithm that given any $T,m$, runs in time $\text{poly}(m,2^T)$ and outputs a $(T,m)$-set gadget with universe size $\text{poly}(m,2^T)$.
\end{lemma}


\section{Baby PIH}
\label{sec:baby}
\noindent In this section, we analyze the direct product construction to prove Baby PIH under $\wone \neq \fpt$.
\begin{theorem}[Main]\label{thm:baby}
    For any integer $r>1$, there is an integer $t>0$ such that for any 2CSP instance $\Pi=(X,\Sigma,\Phi)$ and its $t$-wise direct product instance $\Pi^{\odot t}=(X',\Sigma',\Phi')$:
    \begin{itemize}
        \item (Completeness) If $\Pi$ is satisfiable, then $\Pi^{\odot t}$ is satisfiable as well.
        \item (Soundness) If $\Pi$ is not satisfiable, then $\Pi^{\odot t}$ is not $r$-list satisfiable.
    \end{itemize}
\end{theorem}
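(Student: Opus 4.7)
The plan is to verify completeness as a short direct check and then establish soundness by induction on $r$, choosing $t(r)$ inductively and large enough. For completeness, if $\sigma:X\to\Sigma$ satisfies $\Pi$, the map $S\mapsto\sigma|_S$ is a satisfying assignment of $\Pi^{\odot t}$: each restriction is a valid element of the domain of $S$ (it satisfies every $\Pi$-constraint whose variables lie in $S$), and any two restrictions $\sigma|_S,\sigma|_T$ automatically agree on $S\cap T$, so every consistency constraint is met.

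For the soundness direction I prove the contrapositive: if $\Pi^{\odot t(r)}$ admits an $r$-list satisfying assignment $\tau$, then $\Pi$ is satisfiable. Base case $r=1$ with $t(1)=2$: a consistent family of partial satisfying assignments $\{\sigma_S\}_{S\in\binom{X}{t}}$ glues into a global assignment $\sigma(x):=\sigma_S(x)$ for any $S\ni x$, well-defined by pairwise consistency, and every constraint of $\Pi$ on a pair $(u,v)$ is satisfied via $\sigma_{\{u,v\}}$. For the inductive step, assume the claim for $r-1$ with $t(r-1)$ and take $t(r)$ much larger. If every list of $\tau$ already has size at most $r-1$, I project down: for each $S'\in\binom{X}{t(r-1)}$ I fix some superset $S\supseteq S'$ of size $t(r)$ and set $\tau'(S'):=\{\sigma|_{S'}:\sigma\in\tau(S)\}$. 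Since $S_1'\cap S_2'\subseteq S_1\cap S_2$, pairwise consistency of $\tau$ transfers to $\tau'$, giving an $(r-1)$-list satisfying assignment of $\Pi^{\odot t(r-1)}$, and the induction hypothesis concludes.

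The real content lies in the remaining case, where some list $\tau(S^*)=\{\sigma^1,\dots,\sigma^r\}$ has exactly $r$ distinct partial assignments. The strategy is to prune so that at least one of the $\sigma^i$ becomes unused, effectively reducing to $(r-1)$-list satisfaction on a smaller product. For every other subset $T$, pairwise consistency on the pair $(S^*,T)$ selects at least one index $i(T)\in[r]$ such that some $\sigma_T\in\tau(T)$ matches $\sigma^{i(T)}$ on $S^*\cap T$; pigeonholing over $T$ identifies a dominant index $i^*$ covering at least a $1/r$-fraction of the subsets. On this sub-family I prune to $\tau'(T):=\{\sigma_T\in\tau(T):\sigma_T|_{S^*\cap T}=\sigma^{i^*}|_{S^*\cap T}\}$; crucially, whenever $S^*\cap T$ is large enough to distinguish $\sigma^{i^*}$ from at least one of the other $\sigma^j$ inside $\tau(T)$, the pruned list $\tau'(T)$ loses at least one element, putting us in the regime covered by the induction hypothesis.

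The hardest step is step (iii): making sure the pruned sub-family still contains a full direct product over a sufficiently large ground set $X_0\subseteq X$ so that the induction hypothesis on $\Pi|_{X_0}^{\odot t(r-1)}$ actually applies, while simultaneously guaranteeing that $S^*\cap T$ is large for all surviving $T$ (so pruning genuinely reduces list size). My plan here is to iterate the pigeonhole step, peeling off one dominant coordinate at a time via a Ramsey-style refinement that shrinks $X$ to $X_0$ at a controlled rate; the slack between $t(r)$ and $t(r-1)$ is chosen large enough to absorb this shrinkage. As the introduction emphasizes, this is exactly where the PIH setting departs from the Baby PCP setting: the alphabet size is not constant, so the refinement must be organized to keep the time bound and the ground-set loss under control, rather than being freely absorbed into an $O_{r,|\Sigma|}(1)$ factor. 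I expect this bookkeeping to be the main technical obstacle.
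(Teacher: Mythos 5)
Your completeness argument, the base case $r=1$, and the ``all lists already have size at most $r-1$, so project down'' step are fine. But the heart of the proof --- the case where some list has size exactly $r$ --- is not established, and the plan you sketch for it has two genuine flaws. First, pruning relative to a single fixed $S^*$ cannot reduce list sizes globally: for a typical $T\in\binom{X}{t(r-1)}$ the intersection $S^*\cap T$ is empty (the sets have constant size inside a large ground set), and even when $S^*\cap T\neq\emptyset$, all $r$ elements of $\tau(T)$ may agree with $\sigma^{i^*}$ on $S^*\cap T$ while differing outside $S^*$, so $\tau'(T)$ loses nothing. The condition ``$S^*\cap T$ is large enough to distinguish $\sigma^{i^*}$ from another $\sigma^j$'' conflates the elements of $\tau(S^*)$ with those of $\tau(T)$ and is never guaranteed. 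Second, and more fundamentally, your repair plan --- pigeonhole to a dominant index $i^*$ on a $1/r$-fraction of the $T$'s and then do a Ramsey-style refinement shrinking $X$ to $X_0$ --- cannot work as stated: a constant-density family of $t$-sets need not contain all $t$-subsets of any large $X_0$ (this is a hypergraph Tur\'an-type obstruction, not a bookkeeping issue), and even if it did, applying the induction hypothesis to $\Pi|_{X_0}^{\odot t(r-1)}$ would only certify satisfiability of the sub-instance on $X_0$, whereas the theorem requires a satisfying assignment of $\Pi$ on all of $X$. Shrinking the ground set is therefore fatal to the soundness conclusion, not merely a source of parameter loss.

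For contrast, the paper never shrinks $X$. It passes to an \emph{asymmetric bipartite} product $\Pi^{\odot(a,b)}$ with two different set sizes, proves an extraction lemma (Lemma~\ref{lemma:2}) whose counting hinges on $b\ge r\cdot b'+a$ (so that a single $b$-set can swallow $S$ together with all $r$ ``bad witness'' sets $T'_f$), and then runs a double induction: Lemma~\ref{lem:final-baby} reduces the right-hand list size $q$, and Lemma~\ref{lemma:3} handles the $q=1$ case by peeling off, for a single variable $x_i$, one of two values $p,q$ that both extend everywhere --- the point being that every left set containing $x_i$ must carry both values, so discarding one shrinks \emph{every} relevant list by one while consistency is preserved by adding $x_i$ to the smaller sets. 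This per-variable, two-scale mechanism is exactly what replaces your dominant-index/Ramsey step, and it is where the dependence on $|\Sigma|$ is avoided; without something playing that role, your induction does not go through.
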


Since $t$ is a constant depending solely on $r$, the number of variables in the new instance $|X'|=\binom{|X|}{t}$ depends only on $|X|$ rather than $|\Sigma|$, and the alphabet size of the new instance $|\Sigma'|\le |\Sigma|^t$ is polynomial in $|\Sigma|$. Thus for any constant $C$ and any computable function $f$, $f(|X'|)\cdot |\Sigma'|^{C}$ time is upper bounded by $g(|X|)\cdot |\Sigma|^{C\cdot t}$ time for some computable function $g$. Therefore, we have the following corollary from Theorem \ref{thm:baby}:
\begin{corollary}
    Assuming $\wone\neq\fpt$, Baby PIH is true.
\end{corollary}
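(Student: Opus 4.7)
The completeness direction is essentially by definition: any satisfying assignment $\sigma\colon X\to\Sigma$ of $\Pi$ induces a satisfying assignment $S\mapsto\sigma|_S$ of $\Pi^{\odot t}$, since each restriction respects all constraints of $\Pi$ inside $S$ and consistency on overlaps is automatic. The corollary then follows from Theorem~\ref{thm:baby} immediately, because $t$ depends only on $r$: the reduction sends $(|X|,|\Sigma|)$ to $(\binom{|X|}{t},|\Sigma|^{O(t)})$, preserving the FPT parameterization, so $\wone\neq\fpt$ gives Baby PIH.

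For the soundness direction the plan is to induct on the list size $r$, producing a function $t(r)$ such that $r$-list satisfiability of $\Pi^{\odot t(r)}$ forces $\Pi$ to be satisfiable. The base case $r=1$ is the standard gluing argument: a single partial satisfying assignment $\sigma_S$ for each $S\in\binom{X}{t}$ that is pairwise consistent on overlaps glues into one global assignment on $X$, and for $t\ge 2$ every constraint of $\Pi$ sits inside some $S$, so this glued assignment satisfies $\Pi$. For the inductive step I would reduce to the following key lemma: there exists $T=T(r,t_0)$ such that if $\Pi^{\odot T}$ is $r$-list satisfiable, then $\Pi^{\odot t_0}$ is $(r-1)$-list satisfiable. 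Iterating this $r-1$ times down to $t_0=2$ then delivers the required $t(r)$.

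To attack the key lemma, given an $r$-list satisfying assignment $L$ of $\Pi^{\odot T}$, the natural plan is: for each $t_0$-subset $S$ choose some $T$-superset $\hat S\supseteq S$ and define a new list $L'(S)$ as the projection of $L(\hat S)$ onto $S$. If two distinct elements of $L(\hat S)$ happen to agree on $S$, their projections merge and $|L'(S)|\le r-1$. The two challenges are (i) making these projected lists jointly satisfy every consistency constraint between pairs of $t_0$-subsets $S,S'$, which demands a coordinated choice of $\hat S$ and $\hat{S'}$ that respects rectangularity of the product constraints, and (ii) guaranteeing that a merging (or a safe deletion licensed by rectangularity) is available simultaneously for every single $S$, so that the new multi-assignment has size at most $r-1$ everywhere.

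The main obstacle, and where the argument must depart from the Baby PCP argument of Barto and Kozik, is (ii) in the FPT regime. In the Baby PCP world $|\Sigma|$ is constant, so $T$ may depend on $|\Sigma|$ and a brute pigeonhole over labeling patterns suffices; here $|\Sigma|$ can be as large as the input and $T$ must depend only on $r$ and $t_0$. My plan is to exploit the combinatorial structure of $r$-list assignments directly: over a sufficiently large $\hat S$, two distinct elements of $L(\hat S)$ must disagree on many variables of $\hat S$, so for a well-chosen subset $S\subset\hat S$ either two elements of $L(\hat S)$ collapse on $S$ (shrinking the list), or one element is dominated on every consistency constraint it participates in---thanks to the rectangular structure of the product constraints---and can be deleted without harming satisfaction. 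A Ramsey-type simultaneous selection across all $t_0$-subsets should then yield one coherent shrinkage that $(r-1)$-list satisfies $\Pi^{\odot t_0}$, closing the induction; crucially the resulting bound on $T(r,t_0)$ is independent of $|\Sigma|$, which is exactly what is needed to turn the argument into an FPT reduction.
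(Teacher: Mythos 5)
The part of your argument that actually proves the corollary---deriving Baby PIH from Theorem~\ref{thm:baby} by noting that $t$ depends only on $r$, that $|X'|=\binom{|X|}{t}$ depends only on $|X|$, and that $|\Sigma'|\le|\Sigma|^{t}$, so an FPT algorithm for the list-gap problem would decide 2CSP satisfiability (equivalently \kclique{}) in time $g(|X|)\cdot|\Sigma|^{O(1)}$---is exactly the paper's proof, and your completeness observation also matches.

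However, the bulk of your proposal is a plan for the soundness of Theorem~\ref{thm:baby} itself, and there is a genuine gap there. Your key lemma (``$r$-list satisfiability of $\Pi^{\odot T}$ implies $(r-1)$-list satisfiability of $\Pi^{\odot t_0}$'') is, once $T$ is large, essentially as strong as the theorem, so everything rides on the projection argument---and that argument fails as sketched. The claim that two distinct elements of $L(\hat S)$ must disagree on many variables of $\hat S$ is false: they can differ at a single variable no matter how large $\hat S$ is (cf.\ Example~\ref{ex:1}, where all list elements for sets containing $x_1$ agree everywhere except at $x_1$), so enlarging $\hat S$ does not force projected lists to merge, and the ``Ramsey-type simultaneous selection'' is precisely the missing content; it is unclear how to carry it out with parameters independent of $|\Sigma|$, which is the very obstacle separating Baby PIH from Baby PCP. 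The paper takes a different route: it passes to a bipartite product $\Pi^{\odot(a,b)}$ with different subset sizes and different list sizes, $r$ on the left and $q$ on the right; the main induction (Lemma~\ref{lem:final-baby}) decreases $q$, not $r$, and $r$ is decreased only in the special regime where the right side carries single assignments (Lemma~\ref{lemma:3}). The $|\Sigma|$-independence comes from Lemma~\ref{lemma:2}: since every $T\supseteq S$ must agree on $S$ with one of the at most $r$ values in $u(S)$, one can find, for each small set $A$, a single assignment $f_A$ that occurs in $v(T)_{|A}$ for some $T\supseteq T'\cup A$ for every $T'$; the case analysis then either discards list values relative to $f_A$ (shrinking a list by one) or lands in the single-valued situation handled by Lemma~\ref{lemma:3}. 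Your sketch has no analogue of this step, so the soundness argument as proposed does not go through; only the derivation of the corollary from the (unproved) theorem is complete.
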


Note that the completeness in~\Cref{thm:baby} follows trivially by assigning the restriction of the satisfying assignment on $X$ to each subset of variables. The main challenge is to show that when $\Pi^{\odot t}$ has a $r$-list satisfying assignment, $\Pi$ is satisfiable. To prove this, we will first work on a bipartite version of the direct product of 2CSP. 
\begin{definition}[Bipartite Direct Product Construction]
    Given a 2CSP instance $\Pi=(X,\Sigma,\Phi)$ and positive integers $a,b$, the $(a,b)$-bipartite direct product 2CSP instance $\Pi^{\odot (a,b)}=(X',\Sigma',\Phi')$ is constructed as follows.
    \begin{itemize}
        \item The variable set $X'$ consists of all $a$-sized subsets of $X$ on the left side, and all $b$-sized subsets of $X$ on the right side. With a little abuse of notation, we have $X' = \binom{X}{a} \cup \binom{X}{b}$.
        \item The domain of each variable $S \in X'$ is the set of all partial satisfying assignments for $S$ in $\Pi$. 
        \item For every $S \in \binom{X}{a}$ and $ T \in \binom{X}{b}$, we have a constraint in $\Phi'$ that checks whether $\sigma_S$ and $\sigma_T$ are consistent on the values for variables in $S \cap T$. 
    \end{itemize} 
\end{definition}

If for a 2CSP instance $\Pi$, $\Pi^{\odot t}$ is $r$-list satisfiable for $t=\max(a,b)$, by taking restrictions of its assignments on the smaller sets, it is easy to see $\Pi^{\odot(a,b)}$ is $r$-list satisfiable as well.
Thus, our goal is to show that if the bipartite instance $\Pi^{\odot(a,b)}$ is $r$-list satisfiable, then the original instance $\Pi$ is satisfiable.  

Our proof idea is borrowed from the Baby PCP theorem recently proved by Barto and Kozik \cite{BK22}. However, their theorem crucially relies on the fact that the alphabet $|\Sigma|$ is as small as a constant, which helps them to extract satisfying assignments for the smaller instance. The running time of their reduction is, therefore, $|X|^{O_{|\Sigma|}(1)}$, which is not affordable here since $|\Sigma|$ is as large as the input size. We resolve this issue by making use of the structural properties of the assignments in the direct product construction arising from the fact that they satisfy $r$-list consistency. 
If we fix some set $S$ on one side and consider its $r$ assignments, each set on the other side that intersects $S$ must have one of the $r$, which is a constant, assignments for their intersection part. We use this simple yet very useful observation when extracting the assignments in the inductive proof.

In the following, we first prove Lemma \ref{lemma:2}, which is crucial to extract list satisfying assignments for the smaller subsets. Then in Lemma \ref{lemma:3}, we analyze a special case of the bipartite direct product construction when each variable on the right (bigger) side has only one assignment, and the consistency requirement is a slightly weaker one. In Lemma \ref{lem:final-baby}, we finish the analysis of the bipartite direct product construction, from which we get~\Cref{thm:baby} as a corollary.

\begin{lemma}\label{lemma:2}
    Let $k,r,q,a,b,b'$ be integers satisfying $r,q>0, a \ge k, b \ge r\cdot b'+a$. Consider the $(a,b)$-bipartite direct product 2CSP instance based on $\Pi=(X,\Sigma,\Phi)$. Let $u$ be an $r$-sized multi-assignment for $\binom{X}{a}$ and $v$ be a $q$-sized multi-assignment for $\binom{X}{b}$. Suppose for every $S \in \binom{X}{a}, T \in \binom{X}{b}$ with $T \supseteq S$, $v(T)_{|S} \cap u(S) \ne \emptyset$. 
    Then for every $A \in \binom{X}{k}$, there is an assignment $f_A$ for $A$ such that for every $T' \in \binom{X}{b'}$, there is some $T \in \binom{X}{b}$ satisfying $T \supseteq T' \cup A$ and $v(T)_{|A} \ni f_A$.
\end{lemma}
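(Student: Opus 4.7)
The plan is to fix an arbitrary $A \in \binom{X}{k}$, anchor it inside a single set $S \in \binom{X}{a}$ with $S \supseteq A$ (possible since $a \ge k = |A|$), and then show by contradiction that one of the at most $r$ restrictions $\tau|_A$ for $\tau \in u(S)$ already serves as the desired assignment $f_A$.

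Concretely, I would define $C = \{\tau|_A : \tau \in u(S)\}$, so $|C| \le |u(S)| \le r$. Assume for contradiction that no $g \in C$ works as $f_A$. Then for each $g \in C$ there is a ``bad'' witness $T'_g \in \binom{X}{b'}$ such that for every $T \in \binom{X}{b}$ with $T \supseteq T'_g \cup A$ we have $g \notin v(T)|_A$. Form the union $W = S \cup \bigcup_{g \in C} T'_g$; its size is at most $a + r \cdot b' \le b$ by the size hypothesis, so $W$ can be extended to some $T \in \binom{X}{b}$.

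Next I would apply the consistency hypothesis to the pair $(S, T)$: since $T \supseteq S$, there exists $\tau^* \in u(S) \cap v(T)|_S$, witnessed by some assignment $\widetilde{\tau} \in v(T)$ with $\widetilde{\tau}|_S = \tau^*$. Setting $g^* = \tau^*|_A \in C$, restricting $\widetilde{\tau}$ further from $S$ down to $A$ gives $g^* \in v(T)|_A$. On the other hand, $T \supseteq T'_{g^*} \cup A$ by construction, so the bad property of $T'_{g^*}$ forces $g^* \notin v(T)|_A$, a contradiction. Hence some $g \in C$ must qualify as $f_A$.

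The main subtlety is the book-keeping of sizes: the hypothesis $b \ge rb' + a$ is used precisely once, to guarantee that the single anchor $S$ together with the up to $r$ bad witnesses $T'_g$ fit inside a common $T \in \binom{X}{b}$, so the consistency hypothesis can be invoked exactly at the pair $(S, T)$. The bound $|u(S)| \le r$ is essential because it caps the candidate set $C$ at $r$; if $u$ were allowed to grow with the input alphabet, the union $W$ would blow past $b$ and the argument would collapse---this is exactly the structural feature of $r$-list assignments that the introduction highlights as the way around the large-alphabet obstacle. Interestingly, the parameter $q$ never enters the proof, since we only use that $v(T)$ contains \emph{some} assignment consistent with $u(S)$ on $S$, not how many.
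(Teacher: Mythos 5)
Your proof is correct and follows essentially the same route as the paper's: anchor $A$ in a single $S \in \binom{X}{a}$, union $S$ with the (at most $r$) bad witnesses coming from the restrictions $u(S)|_A$, extend to a $T \in \binom{X}{b}$ using $b \ge r\cdot b' + a$, and derive a contradiction with the consistency hypothesis at the pair $(S,T)$. The only cosmetic difference is that you negate the conclusion only for the candidates in $u(S)|_A$ rather than for all assignments on $A$, which is a harmless refinement of the same argument.
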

\begin{proof}
Suppose for the sake of contradiction that there is no such $f_A$ for some set $A \in \binom{X}{k}$. In other words, for any assignment $f$ on $A$, there exists a set $T'_{f} \in \binom{X}{b'}$, such that for every $T \in \binom{X}{b}$ satisfying $T \supseteq T'_{f} \cup A$, $v(T)_{|A} \not\ni f$.

Pick an arbitrary $S \in \binom{X}{a}$ with $S \supseteq A$. This can be done since $a \ge k$.  Consider any set $T \in \binom{X}{b}$ which contains $\cup_{f \in u(S)_{|A}} (T'_f \cup S)$. Such a $T$ exists since $b \ge r\cdot b'+a$. By the assumption about $A$, $v(T)_{|A}$ does not contain any value in $u(S)_{|A}$.  This contradicts the consistency guarantee in the hypothesis of the lemma, namely that for any $T \in \binom{X}{b}$ with $T \supseteq S$, $v(T)_{|S}$ must contain some value in $u(S)$. 
\end{proof}

See Figure \ref{fig:1} for an illustration of Lemma \ref{lemma:2}.

 \begin{figure}[h]
\centering
\includegraphics[height=9cm]{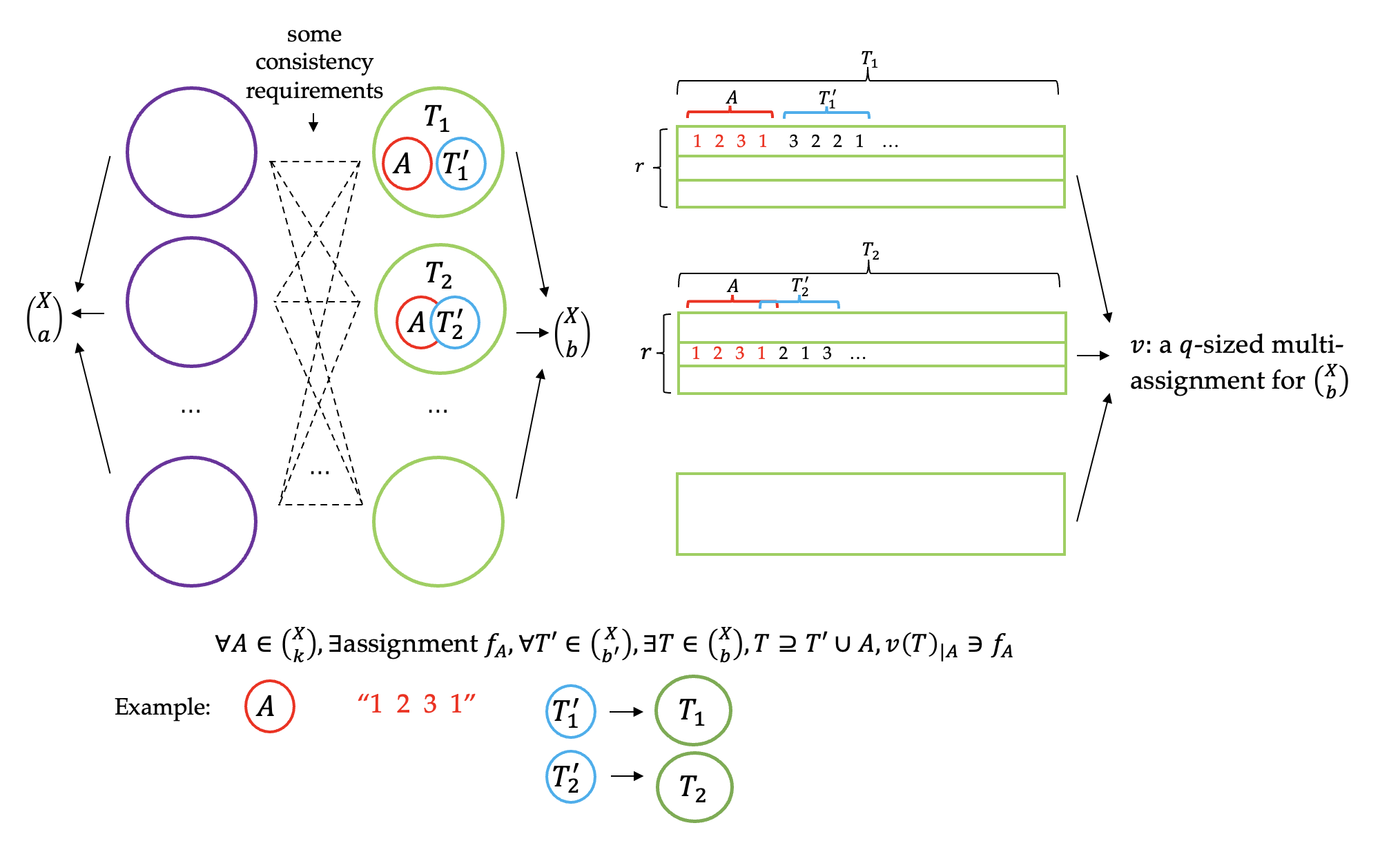}
\caption{An illustration of Lemma \ref{lemma:2}.}
\label{fig:1}
\end{figure}

\begin{lemma}\label{lemma:3}
   For any integer $r > 0$, let $a=r$ and $b=(2r)^r$. Consider the $(a,b)$-bipartite direct product 2CSP instance based on $\Pi=(X,\Sigma,\Phi)$. Let $u$ be an $r$-sized multi-assignment for $\binom{X}{a}$ and $v$ be a 1-sized assignment for $\binom{X}{b}$. Suppose for every $S \in \binom{X}{a}$ and $T \in \binom{X}{b}$ with $S \subseteq T$, $v(T)_{|S} \in u(S)$. Then there is a global satisfying assignment $\sigma$ to the 2CSP instance $\Pi$.
\end{lemma}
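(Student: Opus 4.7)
}
I would prove the lemma by induction on the list size $r$. The base case $r=1$ (so $a=1$ and $b=2$) is immediate: the $1$-list $u$ already determines a candidate $\sigma(x) := u(\{x\})$, and for every pair $\{x,y\}$ the hypothesis forces $v(\{x,y\})$ to agree with $\sigma$ on both coordinates; since $v(\{x,y\})$ is by definition a partial satisfying assignment of the edge $\{x,y\}$, every binary constraint of $\Pi$ is satisfied.

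For the inductive step, my plan is to reduce the $r$-case to the $(r-1)$-case. From the given $(u,v)$ with parameters $(a,b)=(r,(2r)^r)$, I would construct a new system $(u',v')$ on the same $\Pi$ with parameters $(a',b')=(r-1,(2(r-1))^{r-1})$, where $u'$ has list size at most $r-1$ and $v'$ has list size $1$, and check that $(u',v')$ satisfies the hypothesis $v'(T')_{|S'}\in u'(S')$. The main engine is Lemma \ref{lemma:2}: a direct arithmetic check shows that with $k=a=r$, $q=1$, and the intermediate size $b'=(2(r-1))^{r-1}$ one has $b\ge r\cdot b'+a$, so Lemma \ref{lemma:2} produces, for each $A\in\binom{X}{r}$, a single ``popular'' assignment $f_A\in u(A)$ such that for every $T'\in\binom{X}{b'}$ there exists a big set $T\supseteq T'\cup A$ with $v(T)_{|A}=f_A$.

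Using these $f_A$'s, I would set $v'(T')$ for $T'\in\binom{X}{b'}$ by fixing an arbitrary $A\subseteq T'$ of size $r$, producing a witnessing big $T\supseteq T'$ via popularity, and defining $v'(T'):=v(T)_{|T'}$. For $u'(S')$ with $S'\in\binom{X}{r-1}$, the idea is to collect the restrictions $\{f_A|_{S'}:A\supseteq S',\,|A|=r\}$ together with other allowed restrictions arising from $u$, and then use a pigeonhole argument against Lemma \ref{lemma:2} to eliminate at least one candidate per list, leaving a list of size at most $r-1$. The consistency check $v'(T')_{|S'}\in u'(S')$ would then reduce to showing that the restriction of any $v(T)_{|T'}$ chosen above to an $(r-1)$-subset falls into this pruned list.

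The hard part — and the main obstacle — is exactly this pruning-plus-consistency step: given that each $u(S)$ may genuinely have $r$ distinct restrictions to an $(r-1)$-subset, it is not a priori clear that one can remove one of them globally while keeping consistency with a suitable $v'$. This is where the specific magnitude $b=(2r)^r$ should be used: the slack ensures that any candidate for the ``removed'' assignment can be certified bad by producing a witnessing $T$ incompatible with it, via iterated application of Lemma \ref{lemma:2} inside the popularity step. Once $(u',v')$ is verified to satisfy the hypothesis of Lemma \ref{lemma:3} for list size $r-1$, the induction hypothesis yields the desired global satisfying $\sigma$ for $\Pi$.
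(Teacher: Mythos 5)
Your base case and the overall skeleton (induction on $r$, using Lemma~\ref{lemma:2} with $q=1$ and $b'=(2(r-1))^{r-1}$ to extract ``popular'' assignments, then building a smaller system $(u',v')$) match the paper. But the inductive step as you describe it has a genuine gap, and you acknowledge it yourself: the ``pruning-plus-consistency'' step is the entire content of the lemma, and your plan for it does not work as stated. Applying Lemma~\ref{lemma:2} to sets $A$ of size $r$ gives one popular assignment $f_A$ per $A$, but the collection $\{f_A|_{S'} : A\supseteq S'\}$ can contain many distinct values, so defining $u'(S')$ this way gives no bound of $r-1$; and conversely, if you instead inherit $u'(S')$ from some $u(S)$, there is no reason you can discard an element of that list while keeping $v'(T')_{|S'}$ inside it. Appealing to ``the slack in $b=(2r)^r$'' and ``iterated application of Lemma~\ref{lemma:2}'' is not an argument: the slack is only what makes the single application of Lemma~\ref{lemma:2} legal, and nothing in your plan identifies \emph{which} element of each list can be safely removed, nor why the constructed $v'$ avoids it.

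The missing idea in the paper is a dichotomy at the level of \emph{singletons}. Apply Lemma~\ref{lemma:2} with $k=1$: each variable $x_i$ gets at least one popular value. Case (i): some $x_i$ has two distinct popular values $p,q$. Then by consistency every list $u(S)$ with $x_i\in S$ contains assignments taking both values at $x_i$, so keeping only the $p$-branch of $u(S)$ strictly drops the list size to at most $r-1$, and one can simultaneously steer $v'$ to the $p$-branch (choosing $T\supseteq T''\cup\{x_i\}$ with $v(T)_{|\{x_i\}}=p$, with the sets built so that $S'\subseteq T'$ forces $S\subseteq T$), which is exactly what makes the consistency hypothesis survive the pruning. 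Case (ii): every $x_i$ has a unique popular value $z_i$; then no induction is needed at all, because applying Lemma~\ref{lemma:2} to each pair $\{x_i,x_j\}$ and using that $v(T)$ is a partial \emph{satisfying} assignment shows $(z_i,z_j)$ satisfies the constraint on $(x_i,x_j)$, so $(x_i\mapsto z_i)_i$ is already a global satisfying assignment. Without this case split, the induction cannot be pushed through: there is no general way to shrink every list by one, and case (ii) is precisely the situation where your pruning plan would have nothing to prune.
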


\begin{proof}

We apply induction on $r$. When $r=1$, we have $a=1,b=2$, and both $u,v$ are 1-sized assignments. We claim that $u$ is the desired global satisfying assignment of $\Pi$. For each constraint on $(x,y) \in X^2$, $v(\{x,y\})=(x \mapsto u(x), y \mapsto u(y))$ by our consistency guarantee. By the construction of bipartite direct product 2CSP instance, the domain for each $T \in \binom{X}{b}$ consists only of partial satisfying assignments. Thus the fact that $(x \mapsto u(x),y \mapsto u(y))$ lies in the domain of $T=\{x,y\}$ implies $u$ satisfies the constraint on $(x,y)$.

When $r>1$, the idea is to extract consistent assignments for the $(a',b')$-bipartite direct product 2CSP instance and to decrease $r$, for some $a'\le a, b' \le b$. At a high level, if for some $x \in X$, every set in $\binom{X}{a}$ has at least two different values for it under $u$, we can keep only one of them and peel the other off to decrease $r$ by 1; otherwise we can prove the unique assignments are already satisfying.

In the following, define $k=1$, and $a'=a-1,b'=(2(r-1))^{r-1}$, i.e., $a',b'$ are parameters with respect to $r-1$. It's easy to see $k \le a$ and $r\cdot b'+a \le r\cdot (2r)^{r-1}+r \le (2r)^{r}=b$. According to Lemma \ref{lemma:2}, for every $A=\{x_i\}\in \binom{X}{k}$, there is an assignment $f_A$ for $A$ such that for every $T' \in \binom{X}{b'}$, there is some $T\in \binom{X}{b}$ satisfying $T \supseteq T' \cup A$ and $v(T)_{|A}\ni f_A$. Now $v(T)$ is of size-1, so we can simply write $v(T)_{|A}=f_A$.

Consider the following two cases:

\medskip\noindent
\textbf{Case 1:} For some $A=\{x_i\}$, Lemma \ref{lemma:2} holds for different assignments $p,q$. In other words, for every $T' \in \binom{X}{b'}$, there are $T_1,T_2 \in \binom{X}{b}$ satisfying $T_1,T_2 \supseteq T' \cup \{x_i\}$ and $v(T_1)_{|\{x_i\}}=p,v(T_2)_{|\{x_i\}}=q$. 
    
    Since for $r\ge 2$ we have $a=r\le (2(r-1))^{r-1}=b'$, for each $S \in \binom{X}{a}$ containing $x_i$, we can pick an arbitrary $T'\supseteq S$ and consider the corresponding sets $T_1,T_2$ above. By the consistency assumption between $(S,T_1)$ and between $(S,T_2)$ in Lemma \ref{lemma:3}, we can infer $u(S)_{|\{x_i\}}\ni p,q$.

    We construct new assignments $u',v'$ for the $(a',b')$-bipartite direct product 2CSP instance of $\Pi$, such that they still meet the consistency requirements in Lemma \ref{lemma:3}, and the size of $u'$ is at most $r-1$. For each $S' \in \binom{X}{a'}$, $u'(S')$ will be inherited from $u(S)$ for some $S \in \binom{X}{a}$ satisfying $S \supseteq S'\cup\{x_i\}$. Similarly for each $T' \in \binom{X}{b'}$, $v'(T')$ will be inherited from $v(T)$ for some $T \in \binom{X}{b}$ satisfying $T \supseteq T'\cup\{x_i\}$. 

    Suppose there is a arbitrary fixed order of all variables in $X$. We construct $S$ from $S'$ as follows, and output $u'(S')=\{\sigma \in u(S)|\sigma(x_i)=p\}_{|S'}$:
    \begin{itemize}
        \item If $x_i \not\in S'$, let $S=S'\cup \{x_i\}$.
        \item If $x_i \in S'$, $S$ is obtained by adding the lexicographical smallest element not in $S'$ to $S'$.
    \end{itemize}
    Note that we only keep the assignments in $u(S)$ whose restriction on $\{x_i\}$ is $p$, and discard those whose restriction on $\{x_i\}$ is $q$. Thus $u'$ is of size at most $r-1$ as desired.

    For each $T' \in \binom{X}{b'}$, we first construct $T''\in \binom{X}{b'}$ as follows:
    \begin{itemize}
        \item If $x_i \notin T'$, simply let $T''=T'$.
        \item If $x_i \in T'$, $T''$ is obtained by adding the lexicographical smallest element not in $T'$ to $T'$, and delete $x_i$.
    \end{itemize}
    After that, we find a $T \in \binom{X}{b}$ satisfying $T \supseteq T'' \cup \{x_i\}$ and $v(T)_{|\{x_i\}}=p$, and output $v'(T')=v(T)_{|T'}$.

    By our construction, $S' \subseteq T'$ implies $S \subseteq T$, so the new instance still meets the required consistency requirements, and the induction proceeds.

\medskip\noindent\textbf{Case 2:} Suppose for $A$ being each $\{x_i\}$, Lemma \ref{lemma:2} holds for a unique assignment $z_i$. 

    We claim that $\left(x_i \mapsto z_i\right)_{x_i \in X}$ is a global satisfying assignment. 
    Indeed, consider an arbitrary 
     constraint $\psi$ between variables $(x_i,x_j) \in X^2$. Applying Lemma~\ref{lemma:2} to the choice $A=\{x_i,x_j\}$, we know there is an assignment $f_A$ such that for every $T' \in \binom{X}{b'}$, there is some $T \in \binom{X}{b}$ satisfying $T \supseteq T' \cup A$ and $v(T)_{|A}=f_A$. Since $v(T)$ satisfies all constraints within $T$, this means that $f_A$ must satisfy the constraint $\psi$. By the uniqueness assumption of this case, we must have $f_A(x_i)=z_i$ and $f_A(x_j) = z_j$, which means that the assignment $(z_i,z_j)$ satisfies the constraint $\psi$ between $(x_i,x_j)$. 
\end{proof}

We will now finish the analysis of the bipartite direct product construction.
\begin{lemma}
\label{lem:final-baby}

For integers $r,q>0$, let $a=(2r)^{r+2q}, b=(2r)^{r+2q+2}$. Consider the $(a,b)$-bipartite direct product 2CSP instance based on $\Pi=(X,\Sigma,\Phi)$. Let $u$ be an $r$-sized multi-assignment for $\binom{X}{a}$ and $v$ be a $q$-sized multi-assignment for $\binom{X}{b}$. If $u,v$ list-satisfy all bipartite constraints, then there is a global satisfying assignment $\sigma$ to  the 2CSP instance $\Pi$.

\end{lemma}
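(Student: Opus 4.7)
I would prove \Cref{lem:final-baby} by induction on $q$, with \Cref{lemma:3} serving (essentially) as the base case and the inductive step mirroring the Case~1/Case~2 peeling structure of its proof, now applied to shrink the size of $v$ from $q$ to $q-1$ rather than shrinking $u$ from $r$ to $r-1$.

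For the base case $q = 1$, the hypothesis parameters $a = (2r)^{r+2}$ and $b = (2r)^{r+4}$ comfortably exceed the values $r$ and $(2r)^r$ that \Cref{lemma:3} requires. I would therefore pass to smaller-set multi-assignments $(u', v')$ via a canonical extension rule: for each $S' \in \binom{X}{r}$ fix an extension $\hat S \supseteq S'$ in $\binom{X}{a}$ (say, the lexicographically smallest), and analogously $\hat T \supseteq T'$ for $T' \in \binom{X}{(2r)^r}$, chosen coordinately so that $S' \subseteq T'$ forces $\hat S \subseteq \hat T$; set $u'(S') = u(\hat S)_{|S'}$ and $v'(T') = v(\hat T)_{|T'}$. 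Consistency of $(u', v')$ is then inherited from that of $(u, v)$, and \Cref{lemma:3} yields the desired global satisfying assignment.

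For the inductive step ($q > 1$), I apply \Cref{lemma:2} with $k = 1$ to extract, for each variable $x_i \in X$, the set of valid anchor values $f$ satisfying its conclusion. Following \Cref{lemma:3}, I split cases: if some $x_i$ admits two distinct valid anchors $p \ne q$, I build new multi-assignments $(u', v')$ on the smaller $(a', b')$-bipartite instance with $a' = (2r)^{r+2(q-1)}$ and $b' = (2r)^{r+2(q-1)+2}$, where $v'$ is obtained by restricting to those assignments in $v$ that agree with (say) the anchor $p$ at a distinguished variable — shrinking $v$'s size to at most $q-1$ — and $u'$ is inherited via canonical extensions. The induction hypothesis then finishes this case. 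If instead every $x_i$ has a unique anchor $f_i$, applying \Cref{lemma:2} with $k = 2$ to each constraint (as in \Cref{lemma:3} Case~2) forces the global assignment $x_i \mapsto f_i$ to satisfy every constraint of $\Pi$.

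The main obstacle will be the peeling step in the non-unique case: translating the existence of two distinct anchors (a statement that is existential over witness $T$'s in \Cref{lemma:2}) into a restriction that genuinely drops the size of $v$ to $q-1$ while preserving consistency on the smaller bipartite instance. I expect this to require a careful indexing via canonical extensions over a fixed order on $X$, together with invoking the full strength of the list-satisfaction hypothesis on all bipartite constraints (not merely containment pairs). The parameter choice $a = (2r)^{r+2q}$, $b = (2r)^{r+2q+2}$ is tuned so that the exponent drops by exactly $2$ per induction step, providing precisely the slack needed by \Cref{lemma:2}'s inequality $b \ge r b' + a$ and by the canonical-extension constructions.
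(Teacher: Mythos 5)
Your base case ($q=1$) is fine and matches the paper. The genuine gap is in the inductive step. You import the dichotomy of Lemma \ref{lemma:3} (``some variable $x_i$ has two distinct anchors $p\neq q$ vs.\ every variable has a unique anchor'') and hope the two-anchor case lets you peel $v$ down to size $q-1$. But that dichotomy is engineered to shrink the \emph{left} list size $r$ when the right side is $1$-sized: there, for each $S\ni x_i$ the containment consistency $v(T)_{|S}\in u(S)$ forces \emph{both} anchor values into the single list $u(S)$, so discarding one value shrinks $u$. For shrinking $v$ the argument does not transfer: the two anchors $p\neq q$ for $x_i$ are witnessed by \emph{different} sets $T_1,T_2$ in Lemma \ref{lemma:2}, so no single list $v(T)$ is guaranteed to contain two distinct values at $x_i$. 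Consequently your rule ``keep only the assignments of $v(T)$ agreeing with $p$ at $x_i$'' (a) need not reduce any list below size $q$ (a list may already be constant equal to $p$ at $x_i$), and (b) need not preserve list-consistency with the inherited $u'$, because the member of $v(T)$ that was consistent with a given member of $u(S)$ may be exactly one you discarded. You flag this as ``the main obstacle,'' but canonical extensions and a fixed order on $X$ cannot repair it; a different case analysis is required.

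The paper's proof resolves this by dichotomizing on $u$ rather than on multiplicities of anchors, and over large sets: with $k=(2r)^r$, either there exists $A\in\binom{X}{k}$ such that \emph{no} assignment $f_A$ is a ``$u$-anchor'' (for every $f_A$ and every $S'$ there is $S\supseteq S'\cup A$ with $u(S)_{|A}\not\ni f_A$), or every $A$ has a $u$-anchor $f_A$. In the first case, Lemma \ref{lemma:2} supplies a $v$-anchor $f_A$; setting $v'(T')=\{\sigma\in v(T)\mid \sigma_{|A}\neq f_A\}_{|T'}$ for the Lemma-\ref{lemma:2} witness $T$ both drops the size to $q-1$ (since $v(T)_{|A}\ni f_A$, something is discarded) and stays non-empty and consistent with $u'(S')=u(S)_{|S'}$, precisely because the case assumption lets one choose $S$ with $u(S)_{|A}\not\ni f_A$, so the consistent partner in $v(T)$ automatically survives the discarding. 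In the second case one does not recurse at all: the $u$-anchors $\{f_A\}$ themselves define a $1$-sized right assignment $v'(A)=f_A$ on $\binom{X}{k}$ and an $r$-sized left assignment $u'(B)=\bigcup_{A\supseteq B}(f_A)_{|B}$ on $\binom{X}{r}$ (the bound $|u'(B)|\le r$ uses $a\ge (r+1)(a'+k)$), and Lemma \ref{lemma:3} finishes. Your outline uses only $k=1$ and $k=2$ and never exploits $k=(2r)^r$ or the inequality $a\ge(r+1)(a'+k)$, so both mechanisms that make the induction on $q$ work are missing.
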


\begin{proof}
We apply induction on $q$, the size of the right multi-assignment. When $q=1$, we have $a=(2r)^{r+2}\ge r$ and $b=(2r)^{r+4} \ge (2r)^r$. We can extract consistent satisfying assignments for the $(r,(2r)^r)$-bipartite direct product 2CSP instance, and invoke Lemma \ref{lemma:3} to prove there is a global satisfying assignment.

When $q>1$, we categorize discussions based on whether the left multi-assignment $u$ satisfies a certain property. Either we can still extract consistent assignments for the smaller $(a',b')$-bipartite direct product 2CSP instance while decreasing $q$ and leaving $r$ unchanged, or we can build multi-assignments that satisfy requirements in Lemma \ref{lemma:3}, and therefore directly invoke that lemma and stop the induction.

Define $k=(2r)^r$. The parameters $a',b'$ with respect to $q-1$ would be $a'=(2r)^{r+2q-2}$ and $b'=(2r)^{r+2q}$. In our case-analysis we will use the following inequalities, which we first prove here.

\begin{itemize}
    \item $b \ge r\cdot b'+a$.  Indeed 
    \[
    \begin{aligned}
        r\cdot b'+a & = r\cdot (2r)^{r+2q}+(2r)^{r+2q} \\
        & \le 2\cdot (2r)^{r+2q+1} \\
        & \le (2r)^{r+2q+2} =b.
    \end{aligned}
    \]
    \item $a \ge (r+1)\cdot (a'+k)$. Indeed
    \[
    \begin{aligned}
        (r+1)\cdot (a'+k) & = (r+1)\cdot ((2r)^{r+2q-2}+(2r)^r) \\
        & \le (2r)\cdot (2 \cdot (2r)^{r+2q-2}) \\
        & \le (2r)^{r+2q} = a.
    \end{aligned}
    \]
\end{itemize}

Now we consider the following two cases based on $u$. The criterion is whether $u$ satisfies a property which is reminiscent of the result of Lemma \ref{lemma:2}, except in Lemma \ref{lemma:2} the property is for $v$ while here we check it for $u$.

\begin{enumerate}
    \item Suppose there exists $A \in \binom{X}{k}$ such that for every assignment $f_A$ on $A$ and every $S' \in \binom{X}{a'}$, there exists $S \in \binom{X}{a}$ satisfying $S \supseteq S' \cup A$ and $u(S)_{|A} \not\ni f_A$.

    Given $a \ge k$ and $b \ge r\cdot b'+a$, we can apply Lemma \ref{lemma:2} to know that there is an assignment $f_A$ on $A$, such that for every $T' \in \binom{X}{b'}$, there is $T \in \binom{X}{b}$ satisfying $T \supseteq T'\cup A$ and $v(T)_{|A}\ni f_A$.

    We therefore build multi-assignments $u',v'$ for the smaller $(a',b')$-bipartite 2CSP instance as follows.

    For every $S' \in \binom{X}{a'}$, we pick the set $S$ guaranteed by the assumption of this case, and define $u'(S')=u(S)_{|S'}$. Note that $u'$ still has size at most $r$.

    For every $T' \in \binom{X}{b'}$, we pick the set $T$ guaranteed by Lemma \ref{lemma:2}, and define $v'(T')=\{\sigma \in v(T)| \sigma_{|A} \neq f_A\}_{|T'}$. By the assumption, there exists $S \in \binom{X}{a}$ satisfying $S \supseteq A$ and $u(S)_{|A}\not\ni f_A$. Thus to be consistent with $u(S)_{|A}$, we can conclude for every $T \in \binom{X}{b}$ containing $A$, $v(T)_{|A}$ should also contain some value other than $f_A$. Therefore, our constructed $v'$ is non-empty and has size at most $q-1$.

    It's also easy to see $u',v'$ still satisfy list consistency, since in $v'$ we only discard the assignments whose restriction on $A$ equals to $f_A$, which are not consistent with any $u'$.
    
    \item Suppose for every $A \in \binom{X}{k}$, there is an assignment $f_A$ for $A$ and a set $S' \in \binom{X}{a'}$, such that for every $S \in \binom{X}{a}$ satisfying $S \supseteq S' \cup A$, we have $u(S)_{|A}\ni f_A$.

    In this case we can construct $r$-sized multi-assignment $u'$ and $1$-sized assignment $v'$ for left and right part of the $(r,k=(2r)^r)$-bipartite direct product 2CSP of $\Pi$ respectively, that meets the requirements of Lemma \ref{lemma:3}. Furthermore, $u',v'$ are built purely based on $u$.

    For every $B \in \binom{X}{r}$, we define
    \[
    u'(B)=\bigcup_{A \in \binom{X}{k},A \supseteq B} (f_A)_{|B}
    \]
    For every $A \in \binom{X}{k}$, we define $v'(A)=f_A$.
    
    We first claim the size of $u'$ is at most $r$. Suppose it is not, there are $r+1$ sets $A_1,\ldots,A_{r+1} \in \binom{X}{k}$ with $f_{A_i}$ all different. Let $S'_1,\ldots,S'_{r+1}\in\binom{X}{a'}$ be the corresponding sets guaranteed in the assumption of this case. Consider a $S \in \binom{X}{a}$ which contains $\cup_{i=1}^{r+1}(S'_i \cup A_i)$. Such $S$ exists since $a \ge (r+1)\cdot (a'+k)$. Thus, $u(S)_{|A}$ contains $(r+1)$ different values, contradicting the fact that $u$ is of size $r$.

    It's easy to see $u',v'$ meets the requirement of Lemma \ref{lemma:3} by the definition of $u'$. Thus using Lemma \ref{lemma:3}, there is a satisfying assignment to the original instance $\Pi$. \qedhere
\end{enumerate}
\end{proof}

\noindent
Our main result, Theorem \ref{thm:baby}, now follows immediately from~\Cref{lem:final-baby}.

\begin{proof}[Proof of Theorem~\ref{thm:baby}]
    Given an integer $r$, we pick $t=(2r)^{r+2r+2}$, and prove that if a 2CSP instance $\Pi=(X,\Sigma,\Phi)$ is satisfiable, then so is $\Pi^{\odot t}$; if $\Pi$ is not satisfiable, then $\Pi^{\odot t}$ is not $r$-list satisfiable.

    The completeness case is easy: let $\sigma$ be a satisfying assignment for $\Pi$, then we can assign each set $S \in \binom{X}{t}$ the function that maps any $x \in S$ to $\sigma(x)$.

    For soundness case, suppose $\Pi^{\odot t}$ is $r$-list satisfiable by an assignment $\sigma$, we take $a=(2r)^{r+2q}, b=(2r)^{r+2q+2}$ and build multi-assignments $u,v$ for the left and right part of the $(a,b)$-bipartite direct product 2CSP instance $\Pi^{\odot(a,b)}$. For each set $S' \in \binom{X}{a}$, we pick an arbitrary $S \in \binom{X}{t}$ with $S \supseteq S'$, and define $u(S')=\sigma(S)_{|S'}$. Similarly for each $T'\in \binom{X}{b}$, we pick an arbitrary $T \in \binom{X}{t}$ with $T \supseteq T'$, and define $v(T')=\sigma(T)_{|T'}$. It's easy to see $u$ and $v$ are $r$-list consistent. Thus by Lemma \ref{lem:final-baby}, $\Pi$ is satisfiable.
\end{proof}
\section{Average Baby PIH}
\label{sec:avg}

\noindent Let us recall the average Baby PIH conjecture. 
\begin{hypothesis}[Average Baby PIH]
    Given a 2CSP instance $\Pi=(X,\Sigma,\Phi)$, we say a multi-assignment $\sigma$ $r$-average-list satisfies $\Pi$ if $\sum_{x \in X}|\sigma(x)|\le r\cdot |X|$, and for every constraint $\phi_i$, there exists $u \in \sigma(w_{i,1})$ and $v \in \sigma(w_{i,2})$, such that $(u,v) \in R_i$.

    Average Baby PIH states that for any constant $r>1$ and any computable function $f$, no algorithm can given as input a 2CSP instance $\Pi=(X,\Sigma,\Phi)$, distinguish between the following two cases in $f(|X|)\cdot |\Sigma|^{O(1)}$ time:
    \begin{itemize}
        \item (Completeness) $\Pi$ is satisfiable.
        \item (Soundness) $\Pi$ is not $r$-average-list satisfiable.
    \end{itemize}
\end{hypothesis}

\subsection{A Counter Example for Direct Product Construction}

We use the following counter example to show that the Direct Product construction does not give us Average Baby PIH. Specifically, for any $t>0$ and $\varepsilon>0$, there exists an 2CSP instance which is not satisfiable but its $t$-wise direct product is $(1+\varepsilon)$-average-list satisfiable.

\begin{example}\label{ex:1}
    The 2CSP instance $\Pi=(X,\Sigma,\Phi)$ is defined as follows.
    \begin{itemize}
        \item $X=\{x_1,\ldots,x_n\}$.
        \item $\Sigma_{x_1}=\{2,\ldots,n\}$, and for every $i \in \{2,\ldots,n\}$, $\Sigma_{x_i}=\{1\}$.
        \item For every $i \in \{2,\ldots,n\}$, there is a constraint $\phi_i=\langle w_i,R_i\rangle \in \Phi$, where
        \begin{itemize}
            \item $w_i=(x_1,x_i)$.
            \item $R_i=\{(j,1) | j \ne i\}$.
        \end{itemize}
    \end{itemize}
    \end{example}
    $\Pi$ is not satisfiable since no value for $x_1$ can satisfy all constraints. Specifically, for $i \in \{2,\ldots,n\}$, $x_1=i$ would violate constraint $\phi_i$.

    However, for every integer $t>0$, $\Pi^{\odot t}$ can be list satisfied by the following multi-assignment $\sigma$. For every $S \in \binom{X}{t}$, 
    \begin{itemize}
        \item if $x_1 \notin S$, define $\sigma(S)$ to be the single assignment that maps every $x_i \in S$ to 1;
        \item if $x_1 \in S$, then for every $2 \le j \le 2t$ with $x_j \notin S$, let $\sigma(S)$ contain an assignment that maps $x_1$ to $j$, and maps everything else to 1.
    \end{itemize}
    
    It's easy to see $\sigma$ satisfies all constraints induced by $S$. It remains to show that $\sigma$ is consistent on every different $S,T \in \binom{X}{t}$. 
    
    If $x_1 \notin S$ or $x_1 \notin T$, $\sigma(S)$ and $\sigma(T)$ are trivially consistent since every variable in $S \cap T$ is always assigned 1. Now suppose $x_1 \in S\cap T$, we have $|S \cup T| \le 2t-1$. By Pigeonhole Principle, there is a variable $x_j$ with $j \le 2t$, which is neither in $S$ nor in $T$. Thus the assignment that maps $x_1$ to $j$ and everything else to $1$ appears in both $\sigma(S)$ and $\sigma(T)$, proving that $\sigma$ list-satisfies $\Pi^{\odot t}$. The total size of the lists is \[
    \begin{aligned}
        \frac{1}{\binom{|X|}{t}}\sum_{x \in X} |\sigma(x)| & \le \frac{\binom{|X|-1}{t}}{\binom{|X|}{t}}\cdot 1+\frac{\binom{|X|-1}{t-1}}{\binom{|X|}{t}}\cdot 2t \\ 
        & = \left(1-\frac{t}{|X|}\right)\cdot 1+\frac{t}{|X|}\cdot 2t \\
        & \le 1+\frac{2t^2}{|X|},
    \end{aligned}\]
    which is smaller than $1+\varepsilon$ when $|X|$ goes to infinity.

\subsection{Towards the Inapproximability of \kexact}
In this subsection, we show that a slightly strengthened version of Average Baby PIH, namely, Average Baby PIH with rectangular relations, implies constant inapproximability of \kexact{} problem. The latter, to our best knowledge, is currently not known under $\wone \neq \fpt$. Formally, we have the following theorem:
\begin{theorem}\label{thm:exactcover}
    Suppose Average Baby PIH holds even when the 2CSP instance has rectangular relations (see Definition \ref{def:rect}), then for any constant $r \ge 1$ and any function $f$, no algorithm can approximate \kexact{} problem within factor $r$ in $f(k)\cdot n^{O(1)}$ time. Specifically, no algorithm can given a \ksetcover{} instance $\Pi=(\mathcal S,U)$ with size $n$, distinguish between the following two cases in $f(k)\cdot n^{O(1)}$ time:
    \begin{itemize}
        \item There exists $k$ sets in $\mathcal S$ which form a partition of $U$.
        \item The union of any $r\cdot k$ sets in $\mathcal S$ is not $U$.
    \end{itemize}
\end{theorem}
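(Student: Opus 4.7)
My plan is to construct from any 2CSP instance $\Pi = (X, \Sigma, \Phi)$ with rectangular relations (guaranteed hard by the Average Baby PIH hypothesis with some gap $r$) a $k$-ExactCover instance $(\mathcal S, U)$ with $k=|X|$ so that the $r$-average-list gap for $\Pi$ translates directly into the $r$-gap for $k$-ExactCover. The set family will be indexed by variable–label pairs, $\mathcal S = \{S_{x,\alpha} : x \in X,\ \alpha \in \Sigma_x\}$, and the universe will be a disjoint union of per-constraint gadgets, $U = \bigsqcup_{\phi \in \Phi} M_\phi$. The key design principle is that rectangularity must be used both to guarantee a \emph{partition} (not just a cover) in the completeness case, and to certify list-satisfaction in the soundness case.

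For each constraint $\phi=(u,v)$ whose rectangular relation is realized by projections $\pi_u : \Sigma_u \to C_\phi$ and $\pi_v : \Sigma_v \to C_\phi$, I plan to use the tuple-gadget $M_\phi = [L]^L$ with $L = |C_\phi|$, declaring
\[
S_{u,\alpha} \cap M_\phi = \{\vec c \in M_\phi : \pi_u(\alpha) \in \vec c\}, \qquad S_{v,\beta} \cap M_\phi = \{\vec c \in M_\phi : \pi_v(\beta) \notin \vec c\}.
\]
Completeness is immediate: for a satisfying $\sigma$ with $c^\star := \pi_u(\sigma(u)) = \pi_v(\sigma(v))$, the tuples of $M_\phi$ split into those containing $c^\star$ (covered exactly once by $S_{u,\sigma(u)}$) and those avoiding $c^\star$ (covered exactly once by $S_{v,\sigma(v)}$), so $\{S_{x,\sigma(x)}\}_{x \in X}$ is an exact partition of $U$ of size $k$.

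For soundness, an $rk$-cover induces a multi-assignment $\tilde\sigma(x) := A_x$ with $\sum_x |A_x| \le rk$. A tuple $\vec c \in M_\phi$ is \emph{uncovered} precisely when its entries avoid $\pi_u(A_u)$ and it contains $\pi_v(A_v)$; such a $\vec c$ exists iff $\pi_u(A_u) \cap \pi_v(A_v) = \emptyset$ and $|\pi_v(A_v)| \le L$, and the latter condition is automatic. Hence $M_\phi$ is covered iff the rectangular constraint $\phi$ is list-satisfied by $\tilde\sigma$. Covering all of $U$ therefore forces $\tilde\sigma$ to be an $r$-average-list satisfying assignment for $\Pi$, contradicting the Average Baby PIH hypothesis.

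The main obstacle is the size of the gadget: $|M_\phi| = L^L$ is not polynomial in $|\Sigma|$ unless $L$ is bounded. I would handle this in one of two ways. First, preprocess $\Pi$ using an alphabet-reducing transformation that respects rectangularity and the average-list structure, so that each $|C_\phi|$ becomes a constant depending only on $r$; the gadget then has size $O_r(1)$ and the reduction runs in time $f(k)\cdot|\Sigma|^{O(1)}$. Alternatively, truncate to $M_\phi = [L]^t$ for a constant $t = t(r)$: the soundness analysis now allows a ``bypass'' whenever $|\pi_v(A_v)| > t$, and I would need a counting argument combining the budget $\sum_x |A_x| \le rk$ with the degree structure of the constraint graph to show the bypass cannot hold for every constraint simultaneously, so that at least one constraint must be genuinely list-satisfied; iterating and symmetrizing the roles of $u$ and $v$ (using two mirrored copies of the gadget per constraint) should push this to \emph{all} constraints, completing the contradiction.
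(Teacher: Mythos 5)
Your high-level reduction is the same as the paper's: sets $S_{x,\alpha}$ indexed by variable--value pairs, a disjoint gadget per constraint built from the rectangular structure $\pi_u,\pi_v\colon\Sigma\to C_\phi$, an exact partition in the completeness case, and a multi-assignment read off from any cover in the soundness case. The problems are in the gadget itself. First, your claimed equivalence ``$M_\phi$ is covered iff $\phi$ is list-satisfied'' is false: with the ``contains $\pi_u(\alpha)$'' / ``avoids $\pi_v(\beta)$'' membership rules, if $A_v=\emptyset$ while $\pi_u(A_u)=C_\phi$, then every tuple of $[L]^L$ has all its coordinates in $\pi_u(A_u)$ and is covered from the $u$-side alone, even though no pair of $R_\phi$ has been selected (and since $\pi_v$ is a total map, a cover may simply leave $v$ unassigned). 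This one-sided cheat is patchable (add one arbitrary value to each empty list, worsening the average by $1$), but it is precisely what the paper's choice of gadget avoids: the paper plugs in its $(2,|C_i|)$-hypercube partition system, in which both endpoints contribute ``coordinate equals value'' sets $P^{(i)}_{c,1}$, $P^{(i)}_{c,2}$, and no collection of sets drawn from only one side can cover the universe, so covering the gadget forces $P^{(i)}_{c,1}$ and $P^{(i)}_{c,2}$ for a common $c\in C_i$, i.e.\ a consistent pair $(a,b)\in R_i$, with no corner cases.

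The more serious gap is the one you acknowledge and do not close: $|M_\phi|=L^L$ with $L=|C_\phi|$ potentially $\mathrm{poly}(|\Sigma|)$, and neither proposed repair is established. Route (a) --- a transformation making every $|C_\phi|$ constant while preserving rectangularity, perfect completeness and average-list soundness --- is not a known operation in this parameterized setting; it amounts to a PCP-style alphabet reduction and would itself carry most of the difficulty, so invoking it begs the question. Route (b) has a concrete soundness failure, not just a missing calculation: in the truncated gadget $[L]^t$, any $t+1$ values of $v$ with distinct $\pi_v$-images cover $M_\phi$ outright, so even with mirrored copies a cover of total size $rk$ can afford to make roughly $rk/(t+1)$ variables ``heavy'' and thereby cover, without satisfying, every constraint whose two endpoints are both heavy; the extracted multi-assignment then satisfies only the remaining constraints, which does not contradict Average Baby PIH, whose soundness only forbids cheap multi-assignments satisfying \emph{all} constraints. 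Since the hypothesis gives no control over the constraint graph, no generic degree-counting argument can rule this out, and something along the lines of the paper's partition-system gadget (or a partition system tailored to defeat covers of the relevant bounded size) is needed in place of the tuple gadget.
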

\begin{proof}
We reduce a 2CSP instance $\Pi=(X,\Sigma,\Phi)$ with rectangular relations to a \ksetcover{} instance $\Pi'=(\mathcal S,U)$ with $k=|X|$ in the following way. For every $(x,v) \in X \times \Sigma$, we build a set $S_{x,v}$. For each constraint $\phi_i=\langle (w_{i,1},w_{i,2}),R_i\rangle$ in $\Phi$, let $\Gamma_i$ be the underlying set in the rectangular relation $R_i$ and let $\pi_i,\sigma_i:\Sigma\to \Gamma_i$ be the underlying mappings. We build a $(r \cdot k,|\Gamma_i|)$-set gadget $(\mathcal M^{(i)},C_1^{(i)},\ldots,C_m^{(i)})$. For every $(a,b) \in R_i$, we add the set $C_{\pi_i(a)}^{(i)}$ to $S_{w_{i,1},a}$, and add the set $\overline{C_{\sigma_i(b)}^{(i)}}$ to $S_{w_{i,2},b}$. Let the final universe $U$ be the union of every $\mathcal M^{(i)}$.


In the completeness case, let $\sigma:X \to \Sigma$ be a satisfying assignment of $\Pi$. It is easy to see the $k$ sets $\{S_{x,\sigma(x)}\}_{x \in X}$ cover each element of the universe exactly once.

In the soundness case, let $\mathcal S' \subseteq S$ be a collection of sets that covers $U$. Assuming $|\mathcal S'| \le r \cdot k$, we claim the multi-assignment $\sigma$, which maps $x \in X$ to $\{v \in \Sigma \mid S_{x,v} \in \mathcal S'\}$, is a list satisfying assignment of $\Pi$. For every constraint $\phi_i=\langle (w_{i,1},w_{i,2}),R_i\rangle$ with $\Gamma_i$ being the image of the rectangular mapping, by the property of $(r\cdot k,|\Gamma_i|)$-set gadget and the fact that $|\mathcal S'|\le r \cdot k$, $\mathcal S'$ must include $C_{j}^{(i)}$ and $\overline{C_j^{(i)}}$ for some $1 \le j \le |\Gamma_i|$, which implies that $\mathcal S'$ includes $S_{w_{i,1},a}$ and $S_{w_{i,2},b}$ for some $(a,b) \in R_i$, and thus $\phi_i$ is list satisfied.


From \Cref{lem:construct-set-gadget}, a $(r\cdot k,|\Gamma_i|)$-set gadget can be constructed in time $\text{poly}(|\Gamma_i|,2^{r \cdot k})$. The whole reduction runs in FPT time while preserving $k=|X|$. Thus, an $f(k)\cdot n^{O(1)}$ time algorithm for $r$-approximating \kexact{} would give an $f(k)\cdot |\Sigma|^{O(1)}$ algorithm for the $r$-average-list satisfiability of 2CSP with rectangular relations, contradicting the strengthened version of Average Baby PIH.
\end{proof}

We remark that the direct product construction does have rectangular relations, although it does not directly give us Average Baby PIH, in view of Example~\ref{ex:1}.

\section{Discussion and Open Problems}
\label{sec:discuss}

In this concluding section, we speculate on some possible avenues to attack Average Baby PIH or even PIH itself.

\subsection{Average Baby PIH from Clique Hardness?}

In \cite{Lin21}, the author proved that even constant approximating \kclique{} is $\wone$-hard. In \cite{KK22,CFL+23}, the inapproximability ratio was improved to $k^{o(1)}$. Using CSP words, their results can be described as the following theorem:

\begin{theorem}[\cite{KK22,CFL+23}]
    Assuming $\wone\neq\fpt$, no algorithm can, given a 2CSP instance $\Pi=(X,\Sigma,\Phi)$, distinguish between the following two cases in $f(|X|)\cdot |\Sigma|^{O(1)}$ time, for any computable function $f$:
    \begin{itemize}
        \item $\Pi$ is satisfiable.
        \item The constraints induced by any $|X|^{1-o(1)}$ variables are not simultaneously satisfiable.
    \end{itemize}
\end{theorem}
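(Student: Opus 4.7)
\medskip

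\noindent \textbf{Proof Proposal.} The plan is to derive this statement as an immediate corollary of the $k^{1-o(1)}$-inapproximability of multicolored \kclique{} from \cite{KK22,CFL+23}, by invoking the canonical graph-to-2CSP translation already described in \Cref{sec:pre}. So the first step is to restate the known clique hardness in the form we need: assuming $\wone\neq\fpt$, no $f(k)\cdot n^{O(1)}$-time algorithm can separate a multicolored \kclique{} instance $G=(V_1\dot\cup\cdots\dot\cup V_k,E)$ that contains a $k$-clique from one whose maximum clique has size at most $k/k^{1-o(1)}=k^{o(1)}$.

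Next, I would apply the translation from \Cref{sec:pre}: set $X=\{x_1,\ldots,x_k\}$, let $\Sigma_{x_i}=V_i$, and for each unordered pair $i\ne j$ introduce the binary constraint whose relation is the bipartite edge set between $V_i$ and $V_j$. This yields $|X|=k$ and $|\Sigma|\le n$, so any $f(|X|)\cdot|\Sigma|^{O(1)}$-time decider for the 2CSP problem as stated would yield an algorithm with the same parametric runtime for the promised gap-\kclique{} problem, contradicting the hypothesis. Completeness is immediate: a $k$-clique $v_1,\ldots,v_k$ gives the satisfying assignment $x_i\mapsto v_i$.

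For soundness I would argue the contrapositive. Suppose some $S\subseteq X$ with $|S|\ge|X|^{1-o(1)}=k^{1-o(1)}$ admits an assignment $\sigma$ satisfying all constraints induced on $S$. Then the set $\{\sigma(x_i):x_i\in S\}$ consists of $|S|$ vertices, one from each color class $V_i$ with $x_i\in S$, such that every pair is joined by an edge in $G$ (precisely because the corresponding binary constraint holds). This is a clique in $G$ of size at least $k^{1-o(1)}$, contradicting the soundness bound of $k^{o(1)}$ on the maximum clique size.

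The only step requiring any care is quantitatively matching the gap factor $k^{1-o(1)}$ in the clique hypothesis to the threshold $|X|^{1-o(1)}$ on the number of variables in the 2CSP statement; this is routine arithmetic, using $k=|X|$ and the fact that $k/k^{1-o(1)}=k^{o(1)}<k^{1-o(1)}$ for $k$ large. There is no substantive obstacle beyond invoking the underlying \kclique{} inapproximability, on which the whole theorem rests; the argument is essentially a definitional unpacking.
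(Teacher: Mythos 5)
Your overall route is exactly the intended one: the theorem is just the known parameterized \kclique{} inapproximability of \cite{KK22,CFL+23} rewritten in 2CSP language via the canonical clique-to-2CSP translation from the preliminaries (variables $=$ color classes, domains $=$ $V_i$, relations $=$ bipartite edge sets), and your completeness and soundness translations, including the observation that a satisfiable induced set of variables yields a clique of the same size (one vertex per color class, pairwise adjacent), are correct. The paper offers no further proof of this statement beyond precisely this restatement.

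However, you have misquoted the black-box you are invoking, and this is the one place where the argument as written does not stand. The known results give an inapproximability \emph{ratio} of $k^{o(1)}$, i.e., it is $\wone$-hard to distinguish graphs containing a $k$-clique from graphs whose maximum clique has size at most $k/k^{o(1)}=k^{1-o(1)}$ (this is also what the paper says just before the theorem: ``the inapproximability ratio was improved to $k^{o(1)}$''). You state the opposite: a ratio of $k^{1-o(1)}$ with soundness clique size at most $k^{o(1)}$. That stronger statement is \emph{not} known under $\wone\neq\fpt$ (even under Gap-ETH one only has $o(k)$-factor hardness), so as written your proof rests on an unavailable hypothesis, and your final ``quantitative matching'' step ($k/k^{1-o(1)}=k^{o(1)}<k^{1-o(1)}$) is comparing the wrong quantities. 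The fix is immediate: with the correct soundness bound, max clique $< g(k)$ for some $g(k)=k^{1-o(1)}$, your contrapositive argument gives a clique of size $|S|\ge g(k)$ from any satisfiable induced set $S$ of $g(k)$ variables, which is exactly the contradiction needed, and the threshold $|X|^{1-o(1)}$ in the theorem is precisely this $g$. So the translation is fine; the statement of the imported clique hardness needs to be corrected before the derivation is sound.
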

 Thus, it is natural to ask whether we can get Average Baby PIH by applying direct product construction to a 2CSP instance with the above ``clique-like" soundness? More formally, we have the following open question:

\begin{question}
Is it true that for any integer $r>1$, there is an integer $t>0$, such that for any 2CSP instance $\Pi=(X,\Sigma,\Phi)$, if $\Pi^{\odot t}$ is $r$-average-list satisfiable, then there are $|X|^{1-o(1)}$ variables in $\Pi$ such all constraints amongst them are simultaneously satisfiable?
\end{question}

Note that the counterexample for proving average baby PIH using the direct product construction (Example \ref{ex:1}) does not apply here, since there are $|X|-1$ variables in $\Pi$ that are simultaneously satisfiable.

\subsection{PIH via Direct Product Testing Theorems?}

Our constructed instance in proving Baby PIH is reminiscent of the direct product testing, which has been studied in a recent line of work \cite{DS14, DK17, DFH19}. These results have shown that if the $t$-sized subsets of variables have good local consistency, then there is a global function which agrees with most of the subsets. Formally, we have the following theorem from \cite{DFH19}.

\begin{theorem}
\label{thm:dir}
    There is an absolute constant $C>1$, such that for any $\alpha,\beta \in (0,1)$ with $\alpha+\beta \le 1$, there exists a constant $Q(\alpha,\beta)$, such that given any $k,t,m$ with $k \ge C \cdot t$ and $\alpha t \le m \le (1-\beta) t$ and any finite alphabet $\Sigma$ , we have the following.

    Let $\mathcal F=\{f_S:S \rightarrow \Sigma \mid S \in \binom{[k]}{t}\}$ be an ensemble of functions, one for every size-$t$ subset of $[k]$. Let $\mathcal D(m)$ be the distribution as follows:
    \begin{itemize}
        \item Choose $I \in \binom{[k]}{m}$ uniformly at random.
        \item Choose $A,B$ from the set $\{X \mid X \in \binom{[k]}{t}, X \supseteq I\}$ uniformly at random.
    \end{itemize}

    Suppose the following holds:
    \[
    \Pr_{A,B \sim \mathcal D(m)}[{f_A}_{|A \cap B} = {f_B}_{|A \cap B}] \ge 1-\varepsilon,
    \]
    then there exists a global function $g: [k] \rightarrow \Sigma$ such that
    \[
    \Pr_{A \sim \binom{[k]}{t}}[f_A = g_{|A}] \ge 1-Q(\alpha,\beta) \cdot \varepsilon.
    \]
\end{theorem}

By setting the alphabet of each $f_S$ to be the set of all partial satisfying assignments for $S$ and adding the consistency checks according to the above theorem, one may wonder whether we can ``extract'' a large clique from the $1-Q(\alpha,\beta)\cdot \varepsilon$ fraction of subsets that are globally consistent. Formally, let $\mathcal S=\{A | A \in \binom{[k]}{t} , f_A=g_{|A}\}$ as in \Cref{thm:dir}, can we prove there exists a subset $T \subseteq [k]$ of size $\ge k^{1-o(1)}$, such that the following holds?
\begin{itemize}
    \item For every $(i,j) \in \binom{T}{2}$, there exists $A \in \mathcal S$ with $(i,j) \subseteq A$.
\end{itemize}

However, this might not be true if we only use the size bound on $\mathcal S$ and not the additional structures. Consider the following counter-example:

\begin{example}
    Mark each pair $(i,j) \in \binom{[k]}{2}$ as 1 independently with probability $1-\gamma$ with $\gamma$ to be determined. Take $\mathcal S'$ to be the collection of $t$-sized subsets of $[k]$, with all pairs in it marked as 1: $\mathcal S':=\{A \mid A \in \binom{[k]}{t}, \forall (i,j) \subseteq A, (i,j)\text{ is marked as 1}\}$.

    The probability that a $t$-sized subset belongs to $\mathcal S'$ is $(1-\gamma)^{\binom{t}{2}}$, which can be made arbitrarily close to 1 when we set $\gamma=\gamma(t)$ to be some small enough constant depending on the constant $t$.

    However, it was known that (see e.g. \cite{Grimmett_McDiarmid_1975}) the maximum clique size in this Erd\H{o}s-R\'enyi graph is only $2 \log k/\log (1/(1-\gamma))$, far smaller than $k^{1-o(1)}$.
\end{example}

This example suggests that, in order to potentially prove PIH from direct product testing theorems, one may need to analyze the structure of the collection $\mathcal S$. We leave this as an interesting future direction:

\begin{question}
    Can we prove PIH under $\wone \ne \fpt$ using some appropriate form of direct product testing theorems?
\end{question}

\section*{Acknowledgment}
We thank Shuangle Li for pointing out a mistake in the previous proof of \Cref{thm:exactcover}.

\bibliographystyle{alpha}
\bibliography{main}

\end{document}